\newtheorem{theorem}{Theorem}[section]
\newtheorem{lemma}[theorem]{Lemma}
\newtheorem{claim}[theorem]{Claim}
\newtheorem{definition}[theorem]{Definition}
\newtheorem{example}[theorem]{Example}
 \DeclareMathOperator{\sinc}{sinc} \DeclareMathOperator{\SNR}{\sf SNR}
\DeclareMathOperator{\modd}{  mod} \DeclareMathOperator{\PSD}{\sf PSD} \DeclareMathOperator{\IR}{\sf IR}
\begin{document}
\title{Interference Alignment for Line-of-Sight Channels}
\author{Leonard Grokop, \quad David N. C. Tse, \quad Roy D. Yates}
\maketitle

\begin{abstract}
The fully connected $K$-user interference channel is studied in a multipath environment with bandwidth $W$. We show that when each link consists of $D$ physical paths, the total spectral efficiency can grow {\it linearly} with $K$. This result holds not merely in the limit of large transmit power $P$, but for any fixed $P$, and is therefore a stronger characterization than degrees of freedom. It is achieved via a form of interference alignment in the time domain. A caveat of this result is that $W$ must grow with $K$, a phenomenon we refer to as {\it bandwidth scaling}. Our insight comes from examining channels with single path links ($D=1$), which we refer to as line-of-sight (LOS) links. For such channels we build a time-indexed interference graph and associate the communication problem with finding its maximal independent set. This graph has a stationarity property that we exploit to solve the problem efficiently via dynamic programming. Additionally, the interference graph enables us to demonstrate the necessity of bandwidth scaling for any scheme operating over LOS interference channels. Bandwidth scaling is then shown to also be a necessary ingredient for interference alignment in the $K$-user interference channel.
\end{abstract}

\section{Introduction}
The problem of communicating efficiently in wireless adhoc networks has received much attention of late, the focus
being on how to deal with interference in a shared medium. Traditional approaches based on orthogonalizing users (eg.
TDMA or FDMA) or reusing the spectrum (eg. CDMA, certain modes of 802.11) suffer from poor spectral efficiency.
In particular as the number of users in the system grows, the spectral efficiency of each link vanishes. More recent
approaches include using multi-hop \cite{GK}, distributed MIMO \cite{Tse}, \cite{Fran}, and interference alignment
\cite{Mohammed}, \cite{Jafar1}, \cite{Jafar2}. This work belongs to the latter category. Interference alignment is a
technique that uses appropriate precoding to compact interfering signals into small dimensional subspaces at each
receiver. At the same time, the subspace occupied by the data remains linearly independent of the interference. It was
first applied to a multiple MIMO base station problem in \cite{Mohammed} and shown to be capable of achieving
multiplexing gains distinctly greater than those achievable using conventional signalling schemes. The technique was
then extended in \cite{Jafar2} to show that there were exactly $4M/3$ degrees of freedom in the MIMO X channel (two
MIMO transmitters each desiring to send data to two MIMO receivers) with $M>1$ antennas at each transceiver. Following
this, a more sophisticated interference alignment technique was developed in \cite{Jafar1} for the $K$-user
interference channel with an infinite number of independently faded sub-channels, and used to demonstrate that contrary
to conventional wisdom, the total degrees of freedom of the channel is $K/2$.

Whilst the last approach demonstrates the potential benefits that interference alignment techniques can provide, it
suffers from a number of limitations. Perhaps the foremost is that whilst a degrees of freedom characterization is
useful in the high $\SNR$ limit, it may not be meaningful at moderate $\SNR$'s. This stems from the fact that degrees
of freedom characterizes the asymptotic slope of the spectral efficiency curve and not its actual value. In particular,
it is unclear whether at any fixed $\SNR$ the total spectral efficiency of the system is increasing in proportion to
$K$, or increasing at all. The point here is that \cite{Jafar1} does not contain a scaling law result the likes of
\cite{GK} and \cite{Tse}, that is, it does not tell us what happens to the system capacity as more users enter the
fray. We address this question by constructing a communication scheme that achieves a scaling of system capacity
arbitrarily close to linear.

There is another key limitation. It is natural to interpret the parallel channels used in the interference alignment
scheme of \cite{Jafar1} as corresponding to sub-channels in the frequency domain. This is due to various difficulties
associated with realizing the scheme over independently faded parallel channels in time, most notably the very rapid
and accurate channel measurement that must take place, and the substantial delay incurred. But in order for a large
number of frequency channels to undergo independent fading, significant scattering/multipath is required.

In this work we examine the $K$-user interference channel with {\it limited multipath}. We start by assuming each of
the $K^2$ links consists only of a single physical path with complex gain $h_{ij}$ and delay $\tau_{ij}$ seconds. This
model is a good representation for a line-of-sight (LOS) channel. Following this we generalize to the case where each
link consists of $D$ physical paths. We illustrate a simple and elegant representation of interference alignment in the
time domain, in terms of aligning symbols on a time-indexed interference graph. This interference graph proves to be an
extremely useful tool for both conceptualizing and solving various problems relating to interference in LOS channels.
We identify the problem of communicating on the LOS interference channel with the problem of finding a maximal
independent set in the interference graph, and show how this problem can be solved efficiently using dynamic
programming principles. The simplicity of this approach makes it quite versatile and potentially capable of being
extended to tackle a variety of related problems.

For the remainder of this work power spectral density ($\PSD$) is used in place of signal-to-noise ratio, as we will
later wish to compare schemes that use different bandwidths. Depending on the link delays, it may be possible to
achieve a spectral efficiency as high as $\frac{1}{2}\log_2(1+|h_{ii}|^2{\PSD}/N_0)$ bps/Hz for each user $i$,
regardless of the number of interferers. This is exactly half the spectral efficiency achievable in the absence of
interference. We characterize the precise channel conditions for which this is possible and show that they occur in at
least 1/16 of all scenarios for the $3$-user interference channel.

This of course, says little about the {\it typical} gains one can expect by aligning interference in the time domain. To
address this question, we treat the link delays as independent and uniformly distributed random variables in $[0,T_d)$,
where $T_d$ is the delay-spread of the channel. Previous work such as \cite{Mohammed}, \cite{Jafar1}, \cite{Jafar2} has
focused on degrees of freedom as a metric for performance, which is a measure of the scaling of spectral efficiency
with $\PSD$ when the power spectral density is large. We focus on the {\it scaling of spectral efficiency with $K$}
when the number of users is large. The main result of this work is the construction of a communication scheme that
achieves a spectral efficiency arbitrarily close to $O(1)\log_2(1+|h_{ii}|^2{\PSD}/N_0)$ for each user $i$ as
$K\rightarrow\infty$. Thus our result characterizes the best possible scaling of spectral efficiency with $K$ for any
fixed $\PSD$, as compared with \cite{Mohammed}, \cite{Jafar1}, \cite{Jafar2}, where the best possible scaling of
spectral efficiency with $\PSD$, for any fixed $K$, is characterized. In this sense, our characterization has a similar
flavor to characterizations of scaling laws for wireless adhoc networks such as \cite{GK}, \cite{Tse}, however, our
scheme requires no cooperation between users.

A caveat of our result is that the bandwidth must scale sufficiently with $K$. Interpreting the parallel channel model
of \cite{Jafar1} in the frequency domain, one sees that the bandwidth there must also scale with $K$. Interestingly
enough, the bandwidth scaling required for our scheme is essentially the same as that required in \cite{Jafar1}, namely
$O(K^{2K^2})$. However, whereas the scheme in \cite{Jafar1} requires coding over blocks of length $O(K^{2K^2})$, which
creates significant encoding and decoding complexity issues as well as substantial delay, our scheme requires no block
coding, and consequently does not incur any delay or suffer from complexity issues.

Essentially, in order to align interference into a small dimensional subspace and keep it linearly independent of the
data subspace, a high degree of resolvability of the received signals is required. We establish this concretely for the
LOS channel in the context of the interference graph, showing that if the bandwidth scales sub-linearly with $K$, then
the total spectral efficiency of the system (the sum of all users spectral efficiencies) will scale sublinearly, and
hence almost all of the users will witness vanishing spectral efficiencies as $K$ increases.

This suggests that the greatest performance gains can be reaped in systems with large delay spreads. Perhaps the best
example of such a system is the backbone of a mesh network, used to wirelessly connect rural areas. Such systems are
well approximated by a LOS model, have large delay spreads, and are relatively static, making channel measurement
simpler and more accurate.

The structuring of the rest of the paper is as follows. In section \ref{sec:model} we describe the model of the
$K$-user LOS interference channel. Section \ref{sec:preview} provides a summary of the main result concerning the
achievability of non-vanishing spectral efficiencies as the number of users grows. The time-indexed interference graph
is introduced in section \ref{sec:intgraph}. In the same section we present an algorithm for optimizing the spectral
efficiency efficiently via dynamic programming. We also address the questions of bandwidth scaling, and of
characterizing the class of channels for which the spectral efficiency can reach its maximum value. In section
\ref{sec:non_vanish_se} we present our construction that establishes the main result. Following this, in section
\ref{sec:fdomain}, we establish the relationship between time and frequency domain interference alignment techniques.
Section \ref{sec:conc} contains further discussion, extensions and open problems.

\section{Model}\label{sec:model}

We consider the $K$-user interference channel in which there are $K$ transmitters and $K$ receivers. Transmitter $i$
wishes to send data to receiver $i$ but its transmission constitutes interference at all other receivers. We often
refer to each tx-rx pair as a {\it user}. There are thus $K^2$ links in total, $K$ direct links and $K(K-1)$
cross-links. Each link consists of a single physical path. Denote the gain and delay (in seconds) of the link between
transmitter $j$ and receiver $i$ by $h_{ij} \in \mathbb C$ and $\tau_{ij} \in [0,\infty)$, respectively. We assume the
$h_{ij}$ and $\tau_{ij}$ are fixed for the duration of communication. Denote the signal transmitter $j$ sends by
$x_j(t)$. Then the baseband signal at the $i$th receiver is
\begin{equation*}
y_i(t) = \sum_{j=1}^K h_{ij}x_j(t-\tau_{ij}) + z_i(t)
\end{equation*}

\noindent where the $z_i(t)$ are i.i.d. white noise processes with power spectral density $N_0$ Watts/Hz. Denote the
carrier frequency $f_c$ and the bandwidth that the signals $x_j(t)$ are constrained to lie in by $W$ Hz. Assuming the
use of ideal sinc pulses, the passband signal after sampling is given by
\begin{equation*}
y_i[m] = \sum_{j=1}^K h_{ij}e^{-2\pi f_c \tau_{ij}}\sum_{l=0}^\infty \sinc(l - \tau_{ij}W) x_j[m-l] + z_i[m]
\end{equation*}

\noindent where the $z_i[m]$ are i.i.d. ${\cal CN}(0,N_0W)$. We use the following conventional approximation for the
$\sinc$ pulse,
\begin{equation*}
\sinc(t) \approx \left\{
             \begin{array}{ll}
               1, & \hbox{if $-1/2<t < 1/2$;} \\
               0, & \hbox{otherwise,}
             \end{array}
           \right.
\end{equation*}

\noindent see page 27 of \cite{book}. Let $l_{ij}$ denote the integer round-off of the real number $\tau_{ij}W$. This
leads to
\begin{equation*}
y_i[m] = \sum_{j=1}^K h_{ij}e^{-2\pi f_c \tau_{ij}} x_j[m-l_{ij}] + z_i[m].
\end{equation*}

In a wireless model one typically makes an assumption about the statistics of the channel. For example, in a channel
with ISI, the tap coefficients are often modeled as i.i.d. Rayleigh random variables. In models where there is a
dominant path, Rician random variables are used instead. Likewise in this work we make a statistical assumption on the
channel, but limit this statistical assumption only to the link delays, taking the $\tau_{ij}$ to be i.i.d. uniform in
$[0,T_d)$, where $T_d$ denotes the delay-spread of the channel in seconds. This means that if we define $L$ to be one
plus the integer round off of $T_dW$, the $l_{ij}$ are i.i.d. uniform in $\{0,\dots,L-1\}$. No assumption is made on
the link gains $h_{ij}$, other than they all being non-zero.

We refer to this as the $K$-user {\it line-of-sight (LOS)} interference channel, as this is the most common scenario
giving rise to such a model.

There is a straightforward extension of this model to the case where each link consists of $D$ physical paths, such
that the $i$th received signal is
\begin{equation*}
y_i(t) = \sum_{j=1}^K \sum_{d=1}^D h_{ij,d}x_j(t-\tau_{ij,d}) + z_i(t).
\end{equation*}

\noindent Here $h_{ij,d}\in{\mathbb C}$ and $\tau_{ij,d} \in [0,\infty)$ are the complex gain and delay of the $d$th
physical path between transmitter $j$ and receiver $i$. This leads to the following generalization of the passband
model after sampling
\begin{equation*}
y_i[m] = \sum_{j=1}^K \sum_{d=1}^D h_{ij,d}e^{-2\pi f_c \tau_{ij,d}} x_j[m-l_{ij,d}] + z_i[m].
\end{equation*}

\noindent The natural extension of our statistical assumption for the LOS channel is to treat the delays $l_{ij,d}$ as
i.i.d. uniform in $\{0,\dots,L-1\}$. In doing so we are assuming independent delays not just across physical paths of
different links, but also across the physical paths corresponding to the same link.

We refer to this as the {\it $K$-user $D$-path interference channel}.

\section{Preview of Main Result}\label{sec:preview}

\begin{theorem}\label{thm:mr}
For any $\epsilon>0$, there exists a communication scheme on the $K$-user LOS interference channel such that if
$W>(2K(K-1))^{K(K-1)+\epsilon}$, the expected spectral efficiency of user $i$ tends to
\begin{equation}\label{eqn:Erate}
\frac{1}{(K(K-1))^{\epsilon}} \log_2\left(1+|h_{ii}|^2\frac{{\PSD}}{N_0}\right)
\end{equation}

\noindent as $K \rightarrow \infty$.
\end{theorem}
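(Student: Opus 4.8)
The plan is to translate the problem into the language of the time-indexed interference graph, reduce it to exhibiting a dense independent set, and then build such a set by importing the alignment construction of \cite{Jafar1} into the time domain. I would first restrict a priori to on/off signalling with no coding across symbols: transmitter $j$ fixes a periodic set $T_j\subseteq\mathbb Z$ of symbol epochs on which it transmits at the full $\PSD$ and is otherwise silent. With a flat $\PSD$ this gives user $j$ a spectral efficiency $\rho_j\log_2(1+|h_{jj}|^2\PSD/N_0)$, with $\rho_j$ the density of $T_j$, so it suffices to arrange $\rho_j\to(K(K-1))^{-\epsilon}$ for every $j$. Receiver $i$ reads its own stream on the epochs $T_i+l_{ii}$ while transmitter $j$'s signal lands on $T_j+l_{ij}$; hence user $i$ is interference-free exactly when $(T_i+l_{ii})\cap(T_j+l_{ij})=\emptyset$ for all $j\neq i$, i.e.\ when the chosen epochs form an independent set of the interference graph. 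Writing $\delta_{ij}=l_{ii}-l_{ij}$, this is $T_i\cap(T_j-\delta_{ij})=\emptyset$ for each of the $K(K-1)$ ordered pairs $i\neq j$.

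For the construction I would mimic \cite{Jafar1} with the generic channel gains there replaced by the integer offsets $\delta_{ij}$. Let $\Gamma=K(K-1)$ index the cross-links, fix a resolution $m$, and work modulo a period $N$ taken just below the available bandwidth, so $N$ is of the same exponential order as $W$ (hence as $L\approx T_dW$). Take $T_1$ to be the multi-scale grid $\{\sum_{i\neq j}n_{ij}\delta_{ij}\bmod N : 0\le n_{ij}\le m\}$, and for $j\ge2$ take $T_j$ to be the analogous grid with every coordinate range shortened by one and the whole grid translated so that, at every receiver $i$, the interference $T_j+l_{ij}$ from all $j\neq i$ collapses onto a single common translate of $T_1$ --- this is the alignment. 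That common translate is built to avoid the intended epochs $T_i+l_{ii}$, so every user is interference-free, and each $T_j$ has density of order $(m{+}1)^{\Gamma}/N$; taking $m$ of order $2K(K-1)$ --- the largest resolution the grid can carry inside the bandwidth budget --- makes this tend to $(K(K-1))^{-\epsilon}$.

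Every assertion here --- that the grid attains its full size $(m{+}1)^{\Gamma}$ modulo $N$, that the interference really collapses, and that the collapsed interference misses the data epochs --- reduces to one number-theoretic fact: the offsets $\{\delta_{ij}\}$ satisfy no congruence $\sum_{i\neq j}c_{ij}\delta_{ij}\equiv0\pmod N$ with $0<\max_{ij}|c_{ij}|\le C$ for a suitable cutoff $C=O(m)$. In the random model $\delta_{ij}\bmod N$ is uniform on $\mathbb Z_N$ and, in each such congruence, one of the underlying delays enters with a nonzero coefficient of size $\le C$, so each fixed congruence holds with probability at most $C/N$; a union bound over the $(2C{+}1)^{\Gamma}$ coefficient vectors bounds the failure probability by $(2C{+}1)^{\Gamma}C/N$. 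This is precisely where the hypothesis $W>(2K(K-1))^{K(K-1)+\epsilon}$ enters: it pushes $N$ (hence $L$) above $(2K(K-1))^{K(K-1)+\epsilon}$, which dominates $(2C{+}1)^{\Gamma}\approx(2K(K-1))^{K(K-1)}$ by a factor $(2K(K-1))^{\epsilon}$ and so kills the bad event as $K\to\infty$. On its complement the densities are as computed; on the bad event bound the spectral efficiency below by $0$, contributing $o(1)$ to the expectation; letting $K\to\infty$ then gives \eqref{eqn:Erate}. I expect the real difficulty --- and the source of the constants, base $2K(K-1)$ and exponent $K(K-1)+\epsilon$ --- to be this three-way balance: $m$ must be large enough that $(m{+}1)^{\Gamma}/N$ reaches $(K(K-1))^{-\epsilon}$, the cutoff $C=O(m)$ must be small enough that $(2C{+}1)^{\Gamma}$ is dominated by the prescribed $N$, and the translates defining the $T_j$ must be chosen so that the $K-1$ interference patterns at each receiver genuinely coincide rather than merely overlap. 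A lesser nuisance is the wrap-around caused by working modulo $N$ and by the shifts $l_{ij}$, which forces a little care in the density bookkeeping.
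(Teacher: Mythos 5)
Your reduction to independent-set density and your use of a generalized arithmetic progression in the cross-delays are the right starting points, but two steps in the middle do not survive scrutiny. First, the claim that the translates of the $T_j$ can be chosen so that, at every receiver $i$, the interference from all $j\neq i$ ``collapses onto a single common translate of $T_1$'' (and can then be placed disjoint from the data) is too strong. Writing $d_j$ for the translation of $T_j$, exact coincidence of $T_j+l_{ij}$ and $T_{j'}+l_{ij'}$ forces $d_j+l_{ij}\equiv d_{j'}+l_{ij'}$ for all $j,j'\neq i$; comparing two receivers $i\neq i'$ and two interferers $j_1,j_2\notin\{i,i'\}$ then yields $l_{ij_1}-l_{i'j_1}\equiv l_{ij_2}-l_{i'j_2}$, a nontrivial congruence on the delays that fails for generic (random) delays as soon as $K\ge 4$. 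What is actually achievable --- and what the paper uses --- is the weaker containment of $\bigcup_{j\neq i}({\cal T}\oplus l_{ij})$ in the grid with each coefficient range enlarged by one, of size $(N+1)^N$ with $N=K(K-1)$; interference is then not avoided deterministically but merely occupies a fraction at most $eN^{-\epsilon}$ of the slots, and collisions with the data are handled by averaging over the random direct delays $l_{ii}$ (this is precisely why the theorem is stated for the \emph{expected} spectral efficiency, with all users transmitting on the same set).

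Second, your parameter bookkeeping for a single grid is inconsistent, and this hides the main difficulty. To reach density $(K(K-1))^{-\epsilon}$ with one grid of coefficients $0\le n_{ij}\le m$ inside a period of order $(2K(K-1))^{K(K-1)+\epsilon}$ you need $(m+1)^{K(K-1)}\approx(2K(K-1))^{K(K-1)}$, i.e.\ $m\approx 2K(K-1)$; but then the distinctness cutoff is $C=O(m)\approx 2K(K-1)$ and the union bound is of order $(4K(K-1))^{K(K-1)}C/(2K(K-1))^{K(K-1)+\epsilon}$, which diverges --- it is not dominated by the assumed bandwidth, contrary to your estimate $(2C+1)^{\Gamma}\approx(2K(K-1))^{K(K-1)}$. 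Conversely, if you keep the coefficients below $K(K-1)$ so that distinctness does hold under the stated bandwidth, the single grid has only about $(K(K-1))^{K(K-1)}$ points and density of order $2^{-K(K-1)}(K(K-1))^{-\epsilon}$, exponentially short of the target. The paper resolves exactly this tension by interleaving $A=L/N^{N+\epsilon}$ copies of the coefficient-$\{0,\dots,N-1\}$ progression with offsets $m_1,\dots,m_A$, and showing by a probabilistic argument over the offsets that $\mathbb{E}|{\cal S}|/L\to N^{-\epsilon}$; this interleaving step, together with the expectation over $l_{ii}$, is the essential ingredient missing from your proposal. (A lesser point: even if your counting worked, a single grid with $m\approx 2K(K-1)$ gives density about $(2K(K-1))^{-\epsilon}$, a constant factor $2^{-\epsilon}$ below the limit \eqref{eqn:Erate}.)
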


Section \ref{sec:non_vanish_se} is devoted to proving this result. Here the expectation of spectral efficiency is taken
over the random direct delays $l_{ii}$. Roughly speaking this result says that as $K$ scales, it is possible for each
user to communicate at a spectral efficiency arbitrarily close to $O(1)\log_2(1+\frac{{\PSD}}{N_0})$, so long as the
bandwidth scales as fast as $O((2K(K-1))^{K(K-1)})$. In other words, communication at spectral efficiencies that vanish
arbitrarily slowly with $K$ is possible if the bandwidth scales sufficiently. In this result, and throughout this work, we assume that all cross delays $l_{ij}$ for $i\neq j$ are known at all transmitters and receivers. This result is of a similar nature to the scaling laws of \cite{Tse}, in that we show the growth of system capacity with the number of users is arbitrarily close
to linear. Unfortunately, as is the case in \cite{Jafar1}, the required bandwidth scaling is great. A result presented
later on (theorem \ref{thm:W_scaling}) addresses the question of whether bandwidth scaling is necessary.

For the case where each link consists of $D$ physical paths we have the following generalization.
\begin{theorem}\label{thm:mr2}
For any $\epsilon>0$, there exists a communication scheme on the $K$-user $D$-path interference channel such that if
$W>(2DK(DK-1))^{DK(DK-1)+\epsilon}$, the expected spectral efficiency of user $i$ tends to
\begin{equation}\label{eqn:Erate}
\frac{1}{(DK(DK-1))^{\epsilon}} \log_2\left(1+\max_{d \in \{1,\dots,D\}} |h_{ii,d}|^2\frac{{\PSD}}{N_0}\right)
\end{equation}

\noindent as $K \rightarrow \infty$.
\end{theorem}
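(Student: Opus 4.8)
The plan is to reduce the $D$-path problem to the single-path problem of Theorem~\ref{thm:mr} by a change of variable that treats each physical path as a "virtual user," and then apply Theorem~\ref{thm:mr} essentially verbatim. First I would observe that in the passband model
\begin{equation*}
y_i[m] = \sum_{j=1}^K \sum_{d=1}^D h_{ij,d}e^{-2\pi f_c \tau_{ij,d}} x_j[m-l_{ij,d}] + z_i[m],
\end{equation*}
the double sum over $(j,d)$ ranges over $DK$ index pairs, and since the delays $l_{ij,d}$ are i.i.d.\ uniform on $\{0,\dots,L-1\}$ across both $j$ and $d$, the collection $\{l_{ij,d}\}$ has exactly the same joint distribution as the cross-delay matrix of a $DK$-user LOS interference channel. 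So if we relabel the $DK$ pairs $(j,d)$ by a single index $j' \in \{1,\dots,DK\}$, we obtain a structure formally identical to a $DK$-user LOS channel, except that the $DK$ "transmitters" $x_{(j,d)} := x_j$ are not independent — the $D$ virtual transmitters sharing a common $j$ carry the same physical signal.

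Second, I would exploit this coupling to our advantage rather than treating it as an obstruction. In the construction of Theorem~\ref{thm:mr}, user $i$ transmits on a set of time/frequency slots chosen (via the interference-graph / dynamic-programming machinery) so that its desired signal is resolvable from all interference. Applied to the $DK$ virtual users, that construction assigns to each virtual user $(j,d)$ a slot allocation. For real user $j$ I would then simply \emph{pick the single virtual path $d^\star(j)$ whose allocation yields the largest direct gain}, i.e.\ the $d$ achieving $\max_d |h_{jj,d}|^2$, and transmit only on the slots that virtual user's allocation designates; the other $D-1$ virtual copies of user $j$ are silenced. Silencing virtual users only removes interference, so the independent-set / alignment property established for the $DK$-user system is preserved. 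This is why the rate in \eqref{eqn:Erate} features $\max_{d} |h_{ii,d}|^2$ rather than a sum: we are free to ride the strongest path. The bandwidth threshold $W > (2DK(DK-1))^{DK(DK-1)+\epsilon}$ and the prefactor $1/(DK(DK-1))^\epsilon$ are then exactly what Theorem~\ref{thm:mr} delivers with $K$ replaced by $DK$, and the expected spectral efficiency of user $i$ tends to $\tfrac{1}{(DK(DK-1))^\epsilon}\log_2\!\bigl(1 + \max_{d}|h_{ii,d}|^2\,\PSD/N_0\bigr)$ as $K\to\infty$.

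The main obstacle — and the step deserving the most care — is verifying that the reduction is \emph{legitimate}, i.e.\ that nothing in the proof of Theorem~\ref{thm:mr} secretly used the independence of the $K$ transmit signals in a way that breaks when virtual users are correlated or silenced. I would argue this is benign: the interference-graph construction and its maximal-independent-set analysis depend only on the \emph{delays} $l_{ij,d}$ (whose joint law is the required i.i.d.\ uniform one) and on the \emph{nonzero} gains; the decoding at receiver $i$ treats every non-desired arriving symbol as interference regardless of its origin, so merging or zeroing transmit signals cannot hurt. A minor secondary point is to confirm that $L$, defined via $T_d W$, still satisfies $L \to \infty$ fast enough relative to $DK$ under the stated bandwidth scaling — but since $W$ is now required to exceed a threshold phrased in $DK$, this is immediate. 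Finally I would note the whole argument goes through with $D$ fixed and $K\to\infty$, which is the regime in which the theorem is stated.
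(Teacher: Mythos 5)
Your reduction to a $DK$-user LOS channel is exactly the route the paper takes, and your final bookkeeping (replace $K$ by $DK$, keep only the strongest direct-path gain) matches the statement. The genuine problem is the step your argument leans on: \emph{silencing} the $D-1$ non-selected virtual copies of each user. That is not an operation available in this channel, so the justification ``silencing virtual users only removes interference, so the alignment property is preserved'' is unsound. The $D$ paths of a link are propagation paths of one physical signal: whenever transmitter $j$ sends a symbol it arrives at every receiver over all $D$ paths, including the $D-1$ non-dominant direct paths with delays $l_{jj,d}$, $d\neq d^\star(j)$, at its own receiver. Nothing is removed, and your write-up never confronts these self-interference terms because in your framing they have been wished away. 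The paper's one-line proof handles the extra paths at the \emph{receiver}, not the transmitter: assume WLOG $\arg\max_d |h_{ii,d}|^2 = 1$, let every user transmit exactly as in the Theorem \ref{thm:mr} construction, and have receiver $i$ treat the arrivals over paths $2,\dots,D$ of its own link, together with all cross paths, as interference. Then each receiver sees one desired delay and $DK-1$ interfering delays, all i.i.d.\ uniform, so its received signal is statistically identical to that in a $DK$-user LOS channel and Theorem \ref{thm:mr} applies with $K$ replaced by $DK$.

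A related slip is your description of the Theorem \ref{thm:mr} scheme: it is not the interference-graph/dynamic-programming machinery, and it does not hand different users different slot allocations --- every user transmits on the same set ${\cal X}$ generated by the random-offset generalized arithmetic progression. Hence there is no per-path allocation to ``pick the best of''; the $\max_d |h_{ii,d}|^2$ in (\ref{eqn:Erate}) arises from the receiver's choice of which arriving copy to decode, not from a transmitter-side selection. If you recast your argument accordingly --- transmit as in Theorem \ref{thm:mr}, decode the copy arriving on the strongest direct path, and fold the other $DK-1$ i.i.d.-uniform delayed copies (including the weaker copies of the desired signal) into the interference set ${\cal F}_i$ --- then the threshold $W>(2DK(DK-1))^{DK(DK-1)+\epsilon}$ and the prefactor $(DK(DK-1))^{-\epsilon}$ follow exactly as in the paper.
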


As is evident from the statement of the theorem, there is a tradeoff between the number of physical paths per link and
the bandwidth scaling required. Again, the question of whether this bandwidth scaling is necessary is addressed in a
result presented later on.

\section{The Interference Graph}\label{sec:intgraph}

The key insight leading to theorem \ref{thm:mr} comes from formulating the communication problem in a graph theoretical
setting. We start with an example. Consider a $3$-user LOS interference channel where the direct links are all zero,
i.e. $l_{11} = 0$, $l_{22} = 0$ and $l_{33} = 0$, and the cross-links are say, $l_{21}=3$, $l_{31}=1$, $l_{12}=1$,
$l_{32}=4$, $l_{13}=3$, $l_{23}=0$. Choose a length $T$ for the communication block. Create a directed graph ${\cal
G}_{3,T}(l_{12},l_{13},l_{21},l_{23},l_{31},l_{32}) = ({\cal V},{\cal E})$ as follows. Let the vertex set be
\begin{equation*}
{\cal V} = \{v_1(0),\dots,v_1(T-1)\}\cup\{v_2(0),\dots,v_2(T-1)\}\cup\{v_3(0),\dots,v_3(T-1)\}.
\end{equation*}

\noindent The vertex $v_i(t)$ represents the $t$th time slot for the $i$th transmitter. Form the edge set $\cal E$ as
follows. Add a directed edge $e_{21}(0)$ starting from vertex $v_1(0)$ and ending at vertex $v_2(l_{21})=v_2(3)$. This
represents the fact that, owing to a delay of 3 time slots, a transmission during time slot 0 by transmitter 1 arrives
at receiver 2 during time slot 3. Also add a directed edge $e_{31}(0)$ starting from $v_1(0)$ and ending at
$v_3(l_{31}) = v_3(1)$. This represents the fact that, owing to a delay of 1 time slot, a transmission during time slot
0 by transmitter 1 arrives at receiver 3 during time slot 1. Likewise add directed edges $e_{12}(0)$ and $e_{32}(0)$
from vertex $v_2(0)$ to $v_1(l_{12}) = v_1(1)$ and $v_3(l_{32})=v_3(4)$, respectively, and directed edges $e_{13}$ and
$e_{23}$ from vertex $v_3(0)$ to $v_1(l_{13})=v_1(3)$ and $v_2(l_{23})=v_2(0)$, respectively. This set of six edges
encapsulates all of the interference generated by transmissions during time slot 0. As the channel is time-invariant
the same interference structure applies for later time slots. Thus for each $t=1,2,\dots,T-1$ add a directed edge from
$v_1(t)$ to $v_2(t+l_{21})$, provided $t+l_{21}\le T$, a directed edge from $v_1(t)$ to $v_3(t+l_{31})$, provided
$t+l_{31}\le T$, a directed edge from $v_2(t)$ to $v_1(t+l_{12})$, provided $t+l_{12}\le T$, etc... See figure
\ref{fig:int_graph} for an illustration.

\begin{figure}
\centering
\includegraphics[width=350pt]{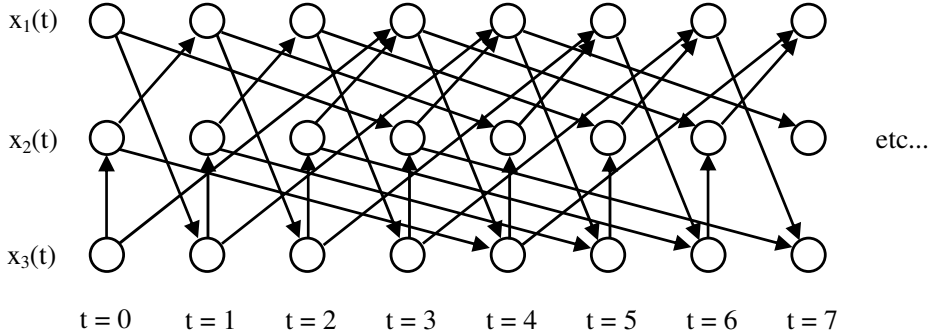}
\caption{The interference graph associated with the LOS channel with direct-delays $l_{11}=0,l_{22}=0,l_{33}=0$ and
cross-delays $l_{21}=3,l_{31}=1,l_{12}=1,l_{32}=4,l_{13}=3,l_{23}=0$.} \label{fig:int_graph}
\end{figure}

In this example all direct delays were zero, whereas in general this is not the case. However as each transmitter $j$
can merely offset it's transmitted sequence $x_j[m]$ by $-l_{jj}$, we can effectively assume without loss of generality
that $l_{jj}=0$. More concretely we define the {\it normalized cross-delays}
\begin{equation*}
l_{ij}' \triangleq l_{ij} - l_{jj}.
\end{equation*}

\noindent Note that $l'_{ij} \in \{-L+1,\dots,L-1\}$, that is, it is possible for $l'_{ij}$ to be negative. At this
point one may wonder why the interference graph need be directed, since the feasibility of a given transmit pattern is independent
of edge direction. The answer is it need not be, but we define it as such to aid in conceptualizing the problem.

\begin{figure}
\centering
\includegraphics[width=350pt]{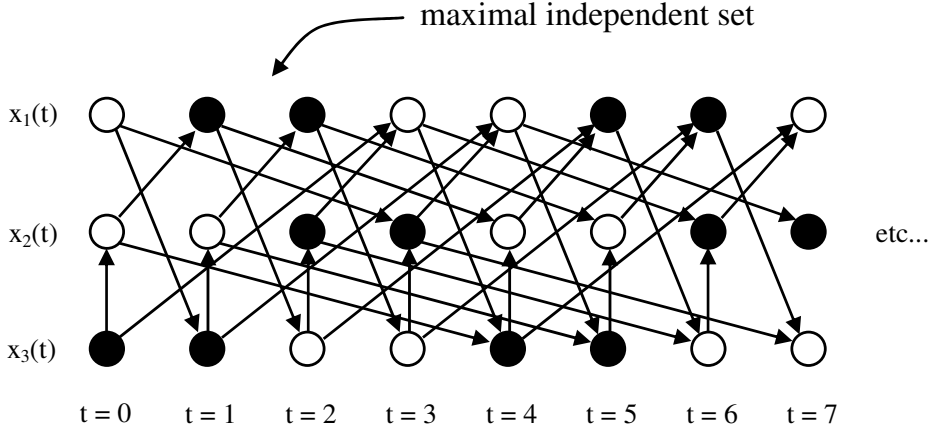}
\caption{A feasible transmit pattern corresponds to an independent set. The vertices of a maximal independent set are
shaded. For a symmetric channel, the maximal independent set maximizes total spectral efficiency.}
\label{fig:max_ind_set}
\end{figure}

In general we have
\begin{definition}
The time-indexed interference graph (or simply interference graph for short) of length $T$ associated with the $K$-user LOS interference channel with normalized cross-delays $\{l'_{ij}\}_{i\neq j}$, is the directed graph ${\cal
G}_{K,T}(\{l'_{ij}\}_{i\neq j}) = ({\cal V},{\cal E})$ where
\begin{align*}
{\cal V} &= \bigcup_{j=1}^K \{v_j(1),\dots,v_j(T) \} \\
{\cal E} &= \bigcup_{i=1}^K\bigcup_{j=1,j\neq i}^K \{e_{ij}(1),\dots,e_{ij}(T-l'_{ij}) \},
\end{align*}

\noindent with edge $e_{ij}(t)$ stemming from vertex $v_j(t)$ and ending at vertex $v_i(t+l'_{ij})$.
\end{definition}

This graph has $KT$ vertices and approximately $K(K-1)T$ edges. As a transmission during time slot $t$ is interfered
with by one time slot at each other user, and itself interferes with one time slot at each other user, each vertex has
both in-degree $K-1$ and out-degree $K-1$. To reduce the notational burden, we often refer to the graph ${\cal
G}_{K,T}(\{l'_{ij}\}_{i\neq j})$ simply as ${\cal G}$, where the parameters of the interference graph are implicit.

One can similarly define a time-indexed interference graph for the $D$-path interference channel, but in the interest of brevity and clarity we do not discuss it here.

%
%

A {\it transmit pattern} is a subset of time slots during which data symbols are sent, one data symbol being sent per
time slot. A transmit pattern is called {\it feasible} if each data symbol is received during a time slot that contains
no interference from other transmissions. Thus a feasible transmit pattern corresponds to an {\it independent set} on
the interference graph. Occasionally we will drop the adjective ``feasible'' when it is clear that the transmit pattern
in question is such. As each data symbol arriving at receiver $i$ during a time slot containing no interfering symbol,
is capable of conveying $\log_2(1+|h_{ii}|^2\frac{\PSD}{N_0})$ bps/Hz, user $i$'s spectral efficiency will be
\begin{equation*}
R_i = N_i \log_2\left(1+|h_{ii}|^2\frac{\PSD}{N_0}\right),
\end{equation*}

\noindent where $N_i$ is the number of vertices in $\{v_i(1),\dots,v_i(T)\}$ that are in the independent set. Let us
assume for the meantime that $h_{ii} = 1$ for all $i$. Then the total spectral efficiency is directly proportional to
the size of the independent set. Thus the problem of designing a communication scheme to maximize total spectral
efficiency reduces to finding the {\it maximal independent set} of the interference graph. Denote the size of the
maximal independent set of a graph ${\cal G}$ (called the {\it independence number}) by $\alpha({\cal G})$. Then the
maximum total spectral efficiency for a graph ${\cal G}$ is
\begin{equation*}
\alpha({\cal G}) \log_2\left(1+\frac{\PSD}{N_0}\right).
\end{equation*}

For the preceding example, the maximal independent set is illustrated in figure \ref{fig:max_ind_set}. Whenever an
independent set contains two vertices that possess a mutual neighbor, the interference generated by these two
transmissions aligns at the mutual neighboring vertex. This is {\it interference alignment in the time domain}. For the
example in figure \ref{fig:max_ind_set}, the neighbors of each unshaded vertex are all shaded, that is, data
transmissions occur at all neighbors of an unshaded vertex. This is not always the case. Suppose we use a TDMA based
communication scheme where user 1 transmits on consecutive time slots for a long period whilst users 2 and 3 remain
silent. Then after a small guard interval user 2 transmits on consecutive time slots whilst users 1 and 3 remain
silent. Finally user 3 transmits, and then back to user 1 and so on. In this round robin scheme, each unshaded vertex
is connected to only a single shaded one and no interference alignment occurs.

The problem of finding the maximal independent set is a well-known NP-hard problem, meaning that for an arbitrary
graph, there is no known algorithm capable of solving the problem in time sub-exponential in the number of vertices.
Knowing this it may appear that finding an optimal transmit pattern requires a computation time that is exponential in
the block length $T$, however the interference graph is not an arbitrary graph. In particular it is stationary in the
sense that, ignoring boundary effects, the structure of the graph is invariant to time shifts. In the next section we
present an algorithm that exploits this property to find the maximum independent set in linear time. More generally,
when the link gains $h_{ii}$ are arbitrary the algorithm solves the problem of finding an independent set that
maximizes spectral efficiency. We refer to this set as the {\it optimal independent set}.

\subsection{Finding the maximal independent set efficiently}

In this section we concentrate on the LOS channel, but the ideas can be extended to the $D$-path channel. Given an
interference graph ${\cal G}$ we now illustrate how dynamic programming principles can be employed to compute the
maximal independent set efficiently. Let each vertex $v_j(t) \in \{0,1\}$ with $v_j(t)=1$ if $v_j(t)$ is included in
the transmit pattern, that is, if a data symbol is transmitted by user $j$ during time slot $t$, and $v_j(t)=0$
otherwise. There is a slight abuse of notation here as we have used $v_j(t)$ to represent both an element of the vertex
set $\cal V$ and an indicator function for whether or not a data symbol is transmitted by transmitter $j$ during time
slot $t$.

\begin{figure}
\centering
\includegraphics[width=350pt]{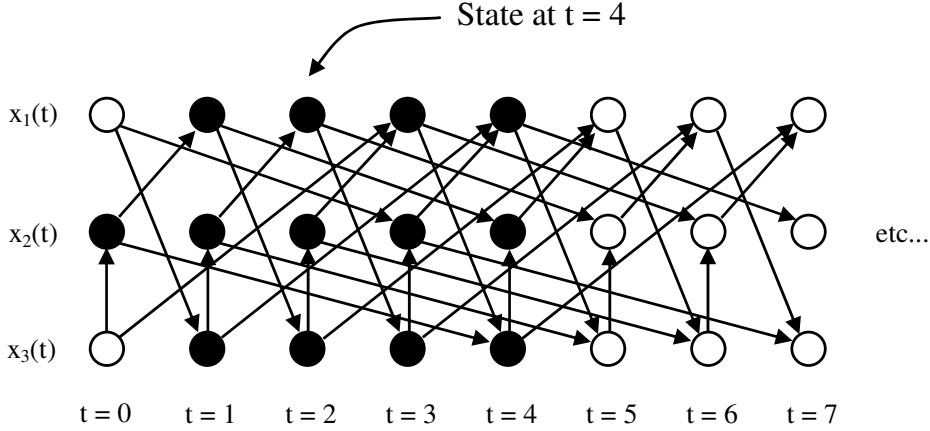}
\caption{The LOS interference channel of figure \ref{fig:int_graph}. The state at time $t=4$ is a function of the
shaded vertices, each taking on one of two values.} \label{fig:state_space}
\end{figure}

Concretely stated, the optimization problem we will solve is
\begin{equation}\label{eqn:op_problem}
\min_{\scriptsize \begin{array}{c}
        \{v_j(t)\}\in\{0,1\}^{KT} \\
        \text{ s.t. } v_i(s) + v_j(t) \le 1, \forall (v_i(s),v_j(t)) \in {\cal E} \\
      \end{array}}
-\sum_{j=1}^Kr_j\left(\sum_{t=0}^{T-1} v_j(t),\right)
\end{equation}

\noindent where $r_i = \log_2(1 + |h_{ii}|^2 \PSD/N_0)$. This is the problem of finding the optimal independent set.
This cost is just the sum of the spectral efficiencies of the users weighted by the number of data symbols they send.
We use a negative sign so as to justify the description of this metric as a {\it cost}, i.e. something we are trying to
minimize. In the event that all direct gains are equal, $r_j$ is independent of $j$ and the problem reduces to finding
the maximal independent set of ${\cal G}$.

To solve this problem efficiently, we start by defining
\begin{equation*}
l_j^* \triangleq \max\{\max_i l'_{ij},\max_i -l'_{ji}\}.
\end{equation*}

\noindent This is the length of the longest edge that connects a vertex at a time $t$, to a vertex belonging to user
$j$ for time $\le t$. When all the $l_{ij}'$ are positive, $l_j^*$ simply represents the longest edge stemming from
user $j$. In the example of figure \ref{fig:int_graph}, $l_1^* = 3$, $l_2^* = 4$ and $l_3^* = 3$. When some of the
$l_{ij}'$ are negative, $l_j^*$ represents the longest edge connecting user $j$ to another user in the forward time
direction. Thus the total amount of memory in the system is $\max_j l_j^*$. If this seems somewhat contrived, recall
that although the interference graph is a directed graph, it need not be defined as such, as the effect of vertex
$v_i(t)$ causing interference at vertex $v_j(t')$ is identical to the effect of vertex $v_j(t')$ causing interference
at vertex $v_i(t)$. What matters for the dynamic programming formulation in this section, is not whether $v_i(t)$ is
causing interference with $v_j(t')$ or vice versa, but whether $t > t'$, $t=t'$ or $t<t'$. In our algorithm we move
through vertices in order of increasing time $t$. {\it The state of the system at time $t$ is defined by those vertices
at times $\le t$ that are connected to vertices at times $\ge t$.}

More precisely, define the state vector at time $t$ to be
\begin{equation}\label{eqn:state_space}
{\bf s}(t) = [ v_1(t) \; \cdots \; v_1(t-l_1^*) \; \cdots \; v_K(t) \; \cdots \; v_K(t-l_K^*) ]^T.
\end{equation}

\noindent This is the collection of all vertices at times $\le t$, that interfere with, or are interfered with by
vertices at times $\ge t$. Figure \ref{fig:state_space} illustrates which vertices are included in the state vector. As
each $v_j(t)$ takes on one of two values, the state space consists of {\it at most} $2^{\sum_{j=1}^K (l_j^*+1)}$ possible
states. Some states may be infeasible because two of their vertices are connected by an edge. Thus define the state
space as the space of all feasible states
\begin{equation*}
{\cal S} = \left\{{\bf s}_1,\dots,{\bf s}_{|{\cal S}|}\right\}
\end{equation*}

\noindent with each ${\bf s}_i \in \{0,1\}^{\sum_{j=1}^K (l_j^*+1)}$. Each state in $\cal S$ corresponds to an {\it
independent set} in the subgraph made up of vertices in ${\bf s}(t)$. Thus there are typically far fewer than
$2^{\sum_{j=1}^K (l_j^*+1)}$ states. A second example is given in figures \ref{fig:dp_example1} and
\ref{fig:dp_example2}. In this example there are a total of 28 states as shown in figure \ref{fig:dp_example2}.

For notational convenience we label the elements of ${\bf s}_i$ as such
\begin{equation*}
{\bf s}_i = [s_i^{(1,0)} \; \cdots \; s_i^{(1,l_1^*)} \; \cdots \; s_i^{(K,0)} \; s_i^{(K,t-l_K^*)}]^T.
\end{equation*}

\noindent Denote the set of feasible state transitions from ${\bf a}\in{\cal S}$ to ${\bf b}\in{\cal S}$ by
\begin{equation*}
{\cal F} = \left\{({\bf a},{\bf b}) : b^{(j,k+1)}=a^{(j,k)}, \text{  for } j=1,\dots,K \text{ and } k = 0,\dots,l_j^*-1
\right\}
\end{equation*}

\begin{figure}
\centering
\includegraphics[width=350pt]{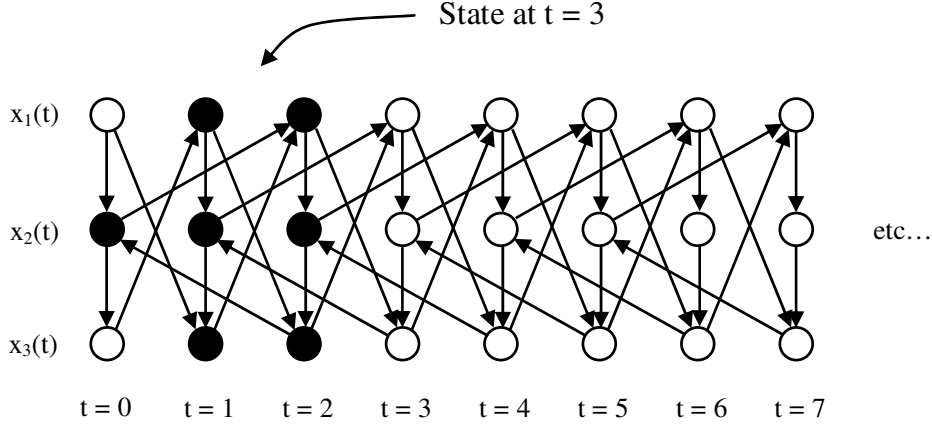}
\caption{A LOS interference channel with normalized cross delays $l'_{21} = 0$, $l'_{31}=1$, $l'_{12}=2$, $l'_{32}=0$,
$l'_{13}=1$ and $l'_{23}=-2$. The state at time $t=2$ is a function of the shaded vertices.} \label{fig:dp_example1}
\end{figure}

\begin{figure}
\centering
\includegraphics[width=420pt]{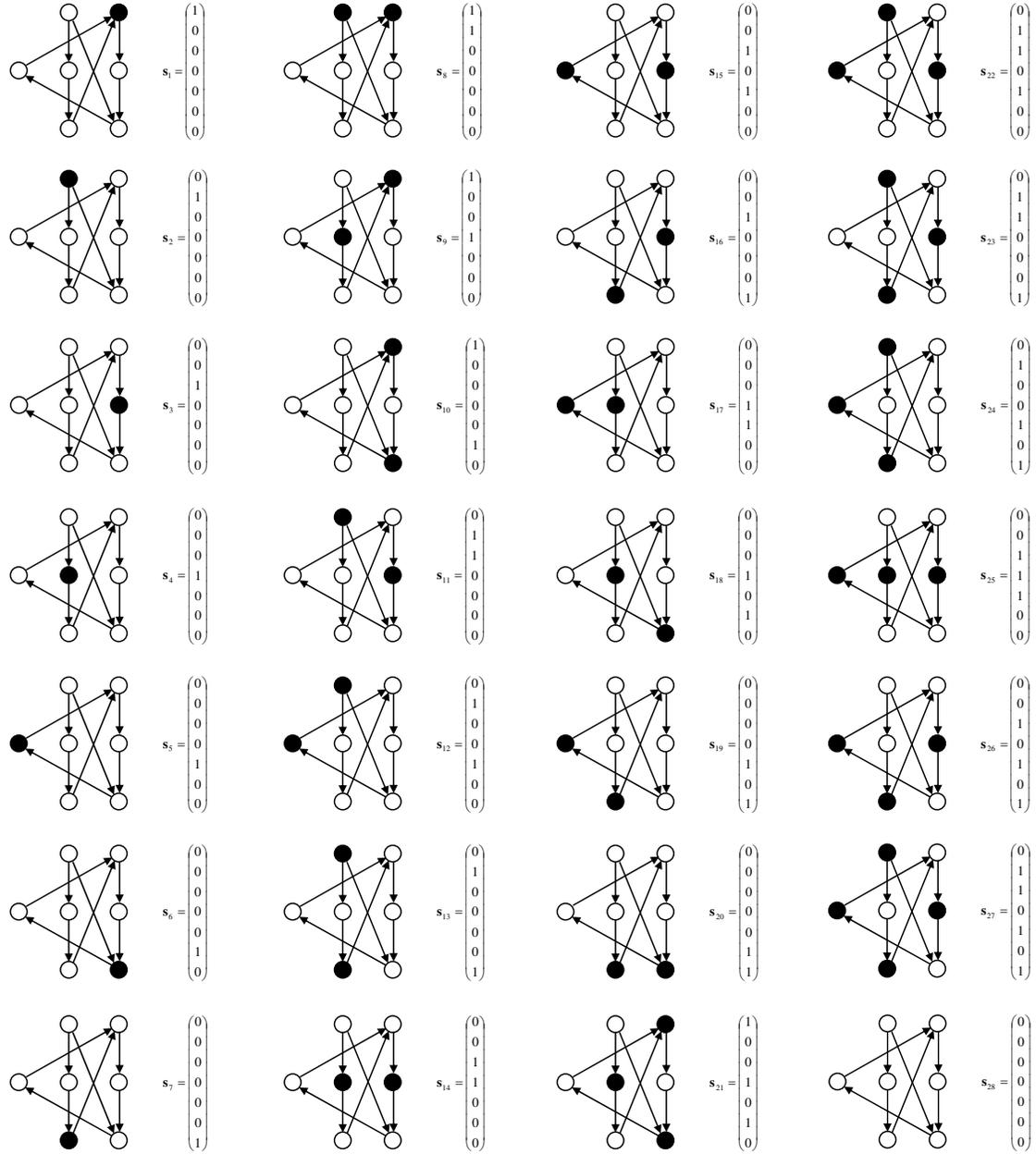}
\caption{An illustration of the entire state space $\cal S$ for the example of figure \ref{fig:dp_example1}. For each
state, both the state vector is given and the corresponding independent set shaded.} \label{fig:dp_example2}
\end{figure}

\noindent For the example of figures \ref{fig:dp_example1} and \ref{fig:dp_example2} the set of feasible state
transitions is
\begin{equation*}
{\cal S}= \{({\bf s}_1,{\bf s}_2),({\bf s}_1,{\bf s}_8),({\bf s}_1,{\bf s}_{11}),({\bf s}_2,{\bf s}_3),\dots\}
\end{equation*}

With the state space clearly defined, it is straightforward to derive the actual algorithm, which is akin to the
Viterbi algorithm. It begins by initializing the costs to zero, that is
\begin{equation*}
c_{{\bf s}_j}(0) = 0,
\end{equation*}

\noindent for all ${\bf s}_j \in {\cal S}$. Each iteration of the algorithm involves finding the minimum cost path
entering each state. At times $t=1,\dots,T$ compute
\begin{equation*}
{\bf s}_{i^*}(t,{\bf s}_j)=\text{arg}\min_{\scriptsize \begin{array}{c}
                                      {\bf s}_i \in {\cal S} \\
                                      ({\bf s}_i,{\bf s}_j) \in {\cal F}
                                    \end{array}} c_{{\bf s}_i}(t-1),
\end{equation*}

\noindent for each ${\bf s}_j\in {\cal S}$, which is the minimum cost state at time $t-1$ from which we can transition
into state ${\bf s}_j$ at time $t$. Then compute
\begin{equation*}
c_{{\bf s}_j}(t)= c_{{\bf s}_{i^*}(t,{\bf s}_j)}(t-1) -\sum_{k=1}^Kr_k s_j^{(k,0)},
\end{equation*}

\noindent for each ${\bf s}_j\in{\cal S}$, which is the minimum cost of a path that ends up at state ${\bf s}_j$ at
time $t$. When time $t=T$ is finally reached, compute
\begin{equation*}
{\bf s}^*(T)=\text{arg}\min_{{\bf s}_i \in {\cal S}} c_{{\bf s}_i}(T),
\end{equation*}

\noindent which is the optimal termination state. The optimal independent set is then found by working backwards. Start
by setting the $T$th column of vertices $v_1(T),\dots,v_K(T)$ according to ${\bf s}^*(T)$. That is, set $v_j(T) =
[s_i^*(T)]^{(j,0)}$ for $j=1,\dots,K$. Then set the $T-1$th column according to ${\bf s}_{i^*}(T,{\bf s}^*(T))$, that
is, set $v_j(T-1)=[s_i^*(T,{\bf s}^*(T))]^{(j,0)}$ for $j=1,\dots,K$. Continue by setting $v_j(T-2) =
[s_i^*(T-1,s_i^*(T,{\bf s}^*(T)))]^{(j,0)}$, etc...

We now briefly examine the complexity of this algorithm. As the state space consists of all independent sets of a
subgraph defined by $2^{\sum_{j=1}^K (l_j^*+1)}$ vertices, given a set of delays $l'_{ij}$, it takes $O(2^{\sum_{j=1}^K
(l_j^*+1)})$ time steps to enumerate. Once this is done, the algorithm takes $O(T|{\cal S}|)$ time steps to solve problem
(\ref{eqn:op_problem}). Roughly speaking, $l'_{ij}=O(L)$, and $|{\cal S}|=O(LK\log LK)$. Thus the algorithm takes
\begin{equation*}
O(TLK\log LK)+O(2^{LK})
\end{equation*}

\noindent time steps to compute the optimal independent set.

As the state space defined in equation \ref{eqn:state_space} is finite, for large $T$ the optimal independent set will
have a periodic form with period less than or equal to the number of states $|{\cal S}|$. Thus, if there is no
restriction on how large $T$ can be, once the period of the maximal independent set is found, we can simply set $T$
equal to it, without compromising optimality. In this case, the entire problem can be solved in
\begin{equation*}
O(LK\log LK2^{LK\log LK})+O\left(2^{LK}) = O((LK)^{LK+1}\log LK\right)
\end{equation*}

\noindent time steps.

\subsection{Bandwidth scaling}\label{sec:BS}

Theorem \ref{thm:mr} shows that if the bandwidth scales sufficiently quickly with $K$, the spectral efficiency
per user can be made to vanish arbitrarily slowly. A natural question to ask is whether it is {\it necessary} for the
bandwidth to scale with $K$, in order for this desirable property to hold. The following converse result establishes
that this is indeed the case.

\begin{theorem}\label{thm:W_scaling}
If the bandwidth scales sufficiently slowly with $K$ such that
\begin{equation*}
\lim_{K\rightarrow \infty}\frac{\log W}{\log\frac{K}{\log K}} = 0
\end{equation*}

\noindent then
\begin{equation}\label{eqn:alpha_scaling}
\lim_{K\rightarrow \infty} \frac{\log \alpha({\cal G})/T}{\log K} = 0
\end{equation}

\noindent with probability one.
\end{theorem}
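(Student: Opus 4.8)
The plan is to dominate $\alpha(\mathcal G)$ by a purely ``single time slot'' quantity and then control that quantity by a first--moment computation over the random delays.

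\textbf{Step 1 (reduction to one column).} Fix a time $t$. By the definition of $\mathcal E$, the subgraph of $\mathcal G$ induced on the column $\{v_1(t),\dots,v_K(t)\}$ has an edge between $v_i(t)$ and $v_j(t)$ precisely when $l'_{ij}=0$ or $l'_{ji}=0$ (the edge $e_{ij}(t)$ from $v_j(t)$ to $v_i(t+l'_{ij})$ stays inside column $t$ exactly when $l'_{ij}=0$, and such an edge exists for every $t\in\{1,\dots,T\}$, so there are no boundary effects and this graph, call it $H_0$, is the same for all $t$). Any independent set $\mathcal I$ of $\mathcal G$ meets each column in an independent set of $H_0$, so $|\mathcal I|\le T\,\alpha(H_0)$ and hence $\alpha(\mathcal G)/T\le\alpha(H_0)$. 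Since trivially $\alpha(\mathcal G)\ge T$ (one user's $T$ vertices form an independent set), it suffices to show $\alpha(H_0)=K^{o(1)}$ with probability one.

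\textbf{Step 2 (first moment for cliques of the complement).} An independent set of $H_0$ is a clique of the complement $\overline{H_0}$, in which $i\sim j$ iff $l_{ij}\ne l_{jj}$ and $l_{ji}\ne l_{ii}$; equivalently, a set $Q$ is a clique of $\overline{H_0}$ iff $l_{ab}\ne l_{bb}$ for every ordered pair $(a,b)$ of distinct elements of $Q$. Here is the key observation: conditioned on the $K$ diagonal delays $\{l_{jj}\}$, the events $\{l_{ab}\ne l_{bb}\}$ involve pairwise distinct off--diagonal delays and are therefore conditionally independent, each of conditional probability $1-1/L$; so $\mathbb P[\,Q\text{ is a clique of }\overline{H_0}\,]=(1-1/L)^{|Q|(|Q|-1)}$ regardless of the diagonals. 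A union bound over the $\binom Ks$ candidate sets gives
\[
\mathbb P\big[\alpha(H_0)\ge s\big]\ \le\ \binom Ks(1-1/L)^{s(s-1)}\ \le\ \exp\!\big(s\ln K-s(s-1)/L\big),
\]
and choosing $s=s_K:=\lceil 2L\ln K\rceil$ makes the right side at most $K^{-2}$ for all large $K$, so $\sum_K\mathbb P[\alpha(H_0)\ge s_K]<\infty$.

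\textbf{Step 3 (Borel--Cantelli and the rate).} By Borel--Cantelli, almost surely $\alpha(H_0)<s_K=O(L\log K)$ for all but finitely many $K$, hence $1\le\alpha(\mathcal G)/T\le s_K$. The hypothesis $\log W/\log(K/\log K)\to 0$ is equivalent to $\log W=o(\log K)$, and since $L=1+\lfloor T_dW\rceil$ this gives $\log L=o(\log K)$, whence $\log s_K=O(\log L)+O(\log\log K)=o(\log K)$. Therefore $0\le\log(\alpha(\mathcal G)/T)/\log K\le\log s_K/\log K\to 0$ almost surely, which is exactly~(\ref{eqn:alpha_scaling}).

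\textbf{Where the work is.} The argument is short, and the two points deserving care are: (i) verifying that the coarse single--column bound $\alpha(\mathcal G)/T\le\alpha(H_0)$ is strong enough --- it is, precisely because when $L=K^{o(1)}$ the complement $\overline{H_0}$ is so dense (edge probability $(1-1/L)^2$) that its clique number is only $O(L\log K)=K^{o(1)}$; and (ii) the conditional--independence observation in Step 2, without which the clique events look correlated through the shared diagonal delays and the clean first--moment estimate is lost. (One could replace the single column by any fixed window of columns, but a single column already suffices.)
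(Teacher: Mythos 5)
Your proposal is correct and follows essentially the same route as the paper: reduce to a single column (dropping all edges between distinct time slots, so that $\alpha({\cal G})\le T\,\alpha(H_0)$) and observe that the column graph is an Erd\H{o}s--R\'enyi-type graph with edge probability $1-(1-1/L)^2$, whose independence number is $O(L\log K)=K^{o(1)}$ under the bandwidth hypothesis. The only real difference is that where the paper cites the known almost-sure limit $\alpha({\cal G}_{n,p})/\log n\to 2/\log(1/(1-p))$, you prove the needed bound directly by a conditional-independence observation, a first-moment union bound, and Borel--Cantelli --- which is, if anything, slightly more careful, since it transparently handles the fact that $L$ (hence $p$) varies with $K$.
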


\noindent As the total spectral efficiency
\begin{equation*}
\sum_{i=1}^K R_i\le \frac{\alpha({\cal G})}{T}\log_2\left(1+\max_i|h_{ii}|^2\frac{\PSD}{N_0}\right)
\end{equation*}

\noindent theorem \ref{thm:W_scaling} implies
\begin{equation*}
\lim_{K\rightarrow\infty}\frac{\log\left(\sum_{i=1}^K R_i\right)}{\log K} = 0,
\end{equation*}

\noindent which is equivalent to $\lim_{K\rightarrow \infty} R_i = 0$ for almost all users $i\in \{1,\dots,K\}$.

Roughly speaking the above result says that if the bandwidth scales slower than $O(K/\log K)$, then the spectral
efficiency resulting from any feasible transmit pattern will vanish as $K\rightarrow \infty$. Note there is a gap
between this converse result and the achievability result of theorem \ref{thm:mr}. Theorem \ref{thm:mr} demonstrates
that it is sufficient for the bandwidth to scale like $O((2K(K-1))^{K(K-1)})$, but theorem \ref{thm:W_scaling} shows
that it is necessary for the bandwidth to scale only as fast as $O(K/\log K)$. This establishes that the slowest
possible bandwidth scaling lies somewhere between $O(K/\log K)$ and $O((2K(K-1))^{K(K-1)})$. It is unclear, which if
any of these bounds is tight.

\begin{proof}
Remove those edges that connect vertices of different time slots in the interference graph, $(v_i(t),v_j(t'))$ for $t'
\neq t$. This provides an upper bound on the independence number. Now consider a single column ${\cal V}(t) =
\{v_1(t),\dots,v_K(t)\}$ of this graph in isolation. For any pair of vertices in ${\cal V}(t)$, there exists an edge
connecting them independently with probability $1-(1-1/L)^2$ (probability $1/L$ for each of the two possible
directions). Thus the graph consisting of vertices ${\cal V}(t)$ and the random subset of edges connecting them, is
precisely the Erd\H{o}s-R\'{e}nyi graph ${\cal G}_{K,1-(1-1/L)^2}$. A well known result (see for example
\cite{Bollobas}) is that
\begin{equation*}
\lim_{n\rightarrow \infty}\frac{\alpha({\cal G}_{n,p})}{\log n} = \frac{2}{\log(1/(1-p))}
\end{equation*}

\noindent with probability one. Hence the independence number of ${\cal G}_{K,2/L}$ satisfies
\begin{align*}
\lim_{n\rightarrow \infty}\frac{\alpha({\cal G}_{K,2/L})}{\log K} &= \frac{2}{\log(1/(1-1/L)^2)} \\
&\le L.
\end{align*}

\noindent As $L$ is directly proportional to $W$, if $\lim_{K\rightarrow \infty}\log W/\log\frac{K}{\log K} = 0$ then the same limit applies for $L$ and
\begin{equation*}
\lim_{K\rightarrow \infty} \frac{\log \alpha({\cal G}_{K,2/L})}{\log K} = 0.
\end{equation*}

\noindent Now as the independence number of the interference graph satisfies $\alpha({\cal G})\le T\alpha({\cal
G}_{K,2/L})$, equation (\ref{eqn:alpha_scaling}) follows.
\end{proof}

\subsection{When is the maximal independent set maximal?}

We now turn to the problem of analysis. Ideally we would like a simple characterization of the size of the maximal
independent set in terms of the parameters of the system, $K$, $T$, and the normalized cross-delays $l'_{ij}$. It is
unclear if such a characterization exists. Instead we present two results. The first characterizes when the
independence number is equal to its maximum possible value, and shows how the maximal independent set can be found
almost instantly in this event. The second, which is theorem \ref{thm:mr}, demonstrates that surprisingly large
independent sets exist on average, when both the number of users and the bandwidth are sufficiently high. In this
section we present the former result, in the next section we present the latter. What we will be revealed in this
section is that the problem of determining whether or not the independence number is equal to its maximum possible
value, is a group theoretic one.

We assume in this section that the direct gains are all equal so that the optimal independent set is equivalent to the
maximal independent set. Owing to the absence of some edges, the boundary of the interference graph has a slightly
different structure than it's interior. In order to circumvent this problem, we let $T\rightarrow \infty$ so that these
boundary effects are negligible.

\begin{definition}
The {\it independence rate} of sequence of interference graphs
\begin{equation*}
{\cal G}_{K,1}(\{l'_{ij}\}_{i\neq j}),{\cal G}_{K,2}(\{l'_{ij}\}_{i\neq j}),\dots
\end{equation*}
\noindent is
\begin{equation*}
{\sf IR}({\cal G}_K(\{l'_{ij}\}_{i\neq j})) \triangleq \lim_{T \rightarrow \infty}\frac{\alpha({\cal G}_{K,T}(\{l'_{ij}\}_{i\neq j}))}{T}.
\end{equation*}
\end{definition}

\noindent We write ${\sf IR}({\cal G}_K)$ for short. Start with the following observation.
\begin{lemma} For any number of users $K$ and any channel $\{l'_{ij}\}_{i\neq j}$
\begin{equation*}
{\sf IR}({\cal G}_K)\le \frac{K}{2}.
\end{equation*}
\end{lemma}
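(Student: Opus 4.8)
The plan is to bound the independence number of ${\cal G}_{K,T}$ by restricting attention to two-user subgraphs and then averaging the resulting inequalities over all pairs of users.

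First I would fix two distinct users $i$ and $j$ and consider the subgraph of ${\cal G}_{K,T}$ on the $2T$ vertices $\{v_i(1),\dots,v_i(T)\}\cup\{v_j(1),\dots,v_j(T)\}$ obtained by keeping only the edges of the two types $e_{ij}(\cdot)$ and $e_{ji}(\cdot)$; call it ${\cal H}_{ij}$. A vertex $v_i(t)$ is the head of exactly one edge $e_{ij}(t-l'_{ij})$ and the tail of exactly one edge $e_{ji}(t)$, and symmetrically for $v_j(t)$, except that for the $O(\max(|l'_{ij}|,|l'_{ji}|))$ vertices that lie within that distance of $t=1$ or $t=T$ one of these incident edges falls outside the block. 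Hence, viewed as an undirected graph (merging any parallel edges that arise when $l'_{ij}=-l'_{ji}$), ${\cal H}_{ij}$ is a disjoint union of paths and cycles, the number of path components being bounded by a constant depending on the delays but not on $T$. Since every edge of ${\cal H}_{ij}$ joins an $i$-vertex to a $j$-vertex, the graph is bipartite, so all its cycles are even; thus each cycle component contributes exactly half of its vertices to a maximum independent set and each of the $O(1)$ path components at most one more than half of its vertices. As the total number of vertices is $2T$, this gives $\alpha({\cal H}_{ij})\le T+c$ for some constant $c$ independent of $T$.

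Next, let $S$ be a maximum independent set of ${\cal G}_{K,T}$ and write $n_k$ for the number of vertices of $S$ among $\{v_k(1),\dots,v_k(T)\}$, so that $\alpha({\cal G}_{K,T})=|S|=\sum_{k=1}^K n_k$. Restricting $S$ to users $i$ and $j$ yields an independent set of ${\cal H}_{ij}$ — deleting edges can only enlarge the family of independent sets — so $n_i+n_j\le\alpha({\cal H}_{ij})\le T+c$ for every pair $i\neq j$. Summing this inequality over all $\binom{K}{2}$ unordered pairs, each $n_k$ is counted in exactly $K-1$ of them, so the left-hand side equals $(K-1)\sum_k n_k=(K-1)|S|$ while the right-hand side equals $\binom{K}{2}(T+c)=\tfrac{K(K-1)}{2}(T+c)$. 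Dividing by $K-1$ gives $|S|\le\tfrac{K}{2}(T+c)$, hence $\alpha({\cal G}_{K,T})/T\le K/2+O(1/T)$, and letting $T\to\infty$ yields $\IR({\cal G}_K)\le K/2$.

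I do not anticipate a genuine obstacle; the only delicate point is the boundary bookkeeping in the first step — confirming that the additive constant $c$ (equivalently, the number of path components of ${\cal H}_{ij}$) can be taken independent of $T$ — but since each $l'_{ij}$ is a fixed finite integer, only $O(\max_j l_j^*)$ vertices near each end of the block are affected, which is immaterial once we pass to the limit in $T$.
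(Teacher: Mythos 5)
Your argument is correct in substance, and it is worth noting that the paper itself offers no proof of this lemma at all -- it is stated as a bare observation -- so your pairwise-subgraph argument supplies exactly the kind of justification the authors leave implicit: restrict a maximum independent set to each pair of users, bound the pair's contribution by $T+O(1)$ using the fact that the two-user interference graph has maximum degree $2$ and is bipartite, and average over the $\binom{K}{2}$ pairs. That averaging step and the passage to the limit in $T$ are both handled correctly.

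One detail needs repair. Your claim that ${\cal H}_{ij}$ has only $O(1)$ path components fails precisely in the degenerate case $l'_{ij}=-l'_{ji}$ (for instance when all normalized delays vanish): after merging the anti-parallel edges, the interior of ${\cal H}_{ij}$ is a perfect matching, i.e.\ roughly $T$ path components of two vertices each, and then ``each path contributes at most half plus one'' only yields the useless bound $\alpha({\cal H}_{ij})\le 2T$. The conclusion $\alpha({\cal H}_{ij})\le T+c$ is still true, but the bookkeeping must be done differently: either do not merge the parallel edges, so that each anti-parallel pair forms an even $2$-cycle contributing exactly half of its vertices, and then every interior vertex genuinely has degree $2$ and only the $O(|l'_{ij}|+|l'_{ji}|)$ boundary-truncated vertices can create path components; or keep the merging but observe that only path components with an \emph{odd} number of vertices exceed half, and the number of such components is bounded by the number of boundary-deficient vertices, which is independent of $T$. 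With either patch your inequality $n_i+n_j\le T+c$ holds for every pair, and the rest of the proof goes through unchanged.
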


\noindent This means that for large $T$ we can only include at most half the vertices of the interference graph in any
feasible transmit pattern. We now ask, when is the independence rate {\it exactly equal} to $K/2$? The following result
succinctly answers this question for $K=3$. Define
\begin{align*}
l&\triangleq l'_{31}+l'_{13}+l'_{32}+l'_{21}+l'_{12}+l'_{23} \\
l_1 &\triangleq l'_{13}+l'_{32}+l'_{21}\\
l_2 &\triangleq l'_{21}+l'_{12}\\
l_3 &\triangleq l'_{31}+l'_{13}.
\end{align*}

\noindent If $l_i\neq 0$, define $\gamma_i$ to be the exponent of 2 in the prime factorization of $l_i$, that is
$l_i=2^{\gamma_i}\beta_i$ where $\beta_i$ represents the rest of the prime factorization. If $l_i=0$ then define $\gamma_i=\infty$. Similarly if $l\neq 0$, define $\gamma$ to
be the exponent of 2 in the prime factorization $l$, i.e. $l=2^{\gamma}\beta$. If $l=0$ then define $\gamma=\infty$.

\begin{theorem}\label{thm:C}
${\sf IR}({\cal G}_3)=3/2$ if and only if $\gamma_1 < \gamma_2$, $\gamma_1 < \gamma_3$ and $\gamma_1 < \gamma$, in
which case there are exactly $2^{\gcd(l_1,l_2/2,l_3/2,l/2)}$ feasible transmit patterns achieving it.
\end{theorem}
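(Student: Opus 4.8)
The plan is to translate feasibility of a transmit pattern into a statement about shifts of index sets on $\mathbb Z$, and then exploit the rigidity forced by attaining the bound ${\sf IR}({\cal G}_3)=3/2$. Write $S_j=\{t:v_j(t)=1\}$ for the slots used by transmitter $j$. The edge $e_{ij}(t)$ forbids choosing both $v_j(t)$ and $v_i(t+l'_{ij})$, so a pattern is feasible exactly when $S_i\cap(S_j+l'_{ij})=\emptyset$ for every ordered pair $i\neq j$. Writing $d_j$ for the limiting density of $S_j$, each disjointness gives $d_i+d_j\le1$; summing over the three pairs reproves the Lemma and shows that ${\sf IR}({\cal G}_3)=3/2$ forces $d_1=d_2=d_3=\tfrac12$ and, for every ordered pair, $S_j+l'_{ij}$ to be the exact complement of $S_i$, i.e.\ $\overline{S_i}=S_j+l'_{ij}$. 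By the eventual-periodicity fact already noted (an optimal independent set is periodic with period at most $|{\cal S}|$), I restrict to periodic patterns, for which ``complement up to density zero'' is genuine complementation modulo the period.

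Next I would extract the structure from the six identities $\overline{S_i}=S_j+l'_{ij}$. Using the two with $i=1$ to solve $S_2=\overline{S_1}-l'_{12}$ and $S_3=\overline{S_1}-l'_{13}$ and substituting into the other four, the two with $j=1$ collapse to ``$S_1$ is invariant under $l_2$'' and ``$S_1$ is invariant under $l_3$'', while the remaining two (after reducing modulo $l_2,l_3$) become $\overline{S_1}=S_1+l_1$ and $\overline{S_1}=S_1+l_1'$, where $l_1'\triangleq l'_{12}+l'_{23}+l'_{31}=l-l_1$. From $\overline{S_1}=S_1+l_1$ one gets invariance under $2l_1$, and together with $S_1+l_1=S_1+l_1'$ one gets invariance under $l=l_1+l_1'$; hence $S_1$ is periodic with period $g\triangleq\gcd(l_2,l_3,2l_1,l)$, and the entire remaining content of the system is the single identity $\overline{S_1}=S_1+l_1$ inside $\mathbb Z/g\mathbb Z$ (conversely, any such $S_1$ gives a full feasible density-$\tfrac12$-per-user pattern, the other identities holding automatically because $g\mid 2l_1-l$ and $g\mid l_2,l_3$). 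So the density-$\tfrac32$ patterns are in bijection with the solutions of $\overline{S_1}=S_1+l_1$ in $\mathbb Z/g\mathbb Z$.

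Finally I would solve this reduced problem group-theoretically. The identity $\overline{S_1}=S_1+l_1$ says translation by $l_1$ carries $S_1$ to its complement in $\mathbb Z/g\mathbb Z$, which is possible iff that translation is a fixed-point-free involution, i.e.\ iff $g\mid 2l_1$ (automatic) and $g\nmid l_1$; then $g$ is even, the translation has $g/2$ two-element orbits, $S_1$ picks one element from each, and there are exactly $2^{g/2}$ solutions. It remains to put ``$g\mid 2l_1$, $g\nmid l_1$'' and ``$g/2$'' into the stated form. Writing $v_2$ for the $2$-adic valuation, $v_2(g)=\min(\gamma_2,\gamma_3,\gamma_1+1,\gamma)$, and since $g\mid 2l_1$ forces $v_2(g)\le\gamma_1+1$, the condition $g\nmid l_1$ is exactly $v_2(g)=\gamma_1+1$, i.e.\ $\gamma_1<\gamma_2$, $\gamma_1<\gamma_3$, $\gamma_1<\gamma$ (which in particular makes $g$ even). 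Under this condition $v_2(g/2)=\gamma_1$ and the odd part of $g/2$ equals the gcd of the odd parts of $l_1,l_2,l_3,l$, so $g/2=\gcd(l_1,l_2/2,l_3/2,l/2)$ and the number of patterns is $2^{\gcd(l_1,l_2/2,l_3/2,l/2)}$, as claimed.

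The main obstacle I expect is bookkeeping rather than insight: keeping the directions and signs of all six edge constraints straight when eliminating $S_2,S_3$ (a slip there replaces $l_1$ by some unrelated combination of the $l'_{ij}$), confirming that the four ``non-defining'' identities are genuinely redundant, and carrying through the $2$-adic computation that rewrites $g/2$ symmetrically. A lesser point is making precise which objects ``feasible transmit patterns achieving it'' counts; the eventual-periodicity fact is what lets one restrict to $g$-periodic patterns so that the count is finite and well defined, and one should also note that the degenerate cases (some of $l_1,l_2,l_3,l$ equal to zero, handled via $\gamma=\infty$) are consistent with the stated criterion.
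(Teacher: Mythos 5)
Your proposal is correct, and it reaches the theorem by a genuinely different route than the paper. The paper decomposes the interference graph into $l$ infinite chain graphs (or, when $l=0$, infinitely many $6$-cycles), reduces the question to a consistent in-phase/out-of-phase assignment to chains, and settles consistency by a coset argument in $\mathbb Z/l\mathbb Z$ for the subgroup generated by $\gcd(2l_1,l_2,l_3)$, ending at the criterion $\gcd(l_1,l_2,l_3,l)\neq\gcd(2l_1,l_2,l_3,l)$ and Lemma \ref{lem:alpha}. You instead work directly with the slot sets $S_j$, observe that attaining density $3/2$ forces the six translation--complementation identities $\overline{S_i}=S_j+l'_{ij}$, eliminate $S_2,S_3$, and compress the whole system into invariance under $l_2,l_3,2l_1,l$ plus the single equation $\overline{S_1}=S_1+l_1$ in $\mathbb Z/g\mathbb Z$ with $g=\gcd(2l_1,l_2,l_3,l)$; since $g\mid 2l_1$, translation by $l_1$ is an involution, so solvability is exactly $g\nmid l_1$ and the solutions number $2^{g/2}$. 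I checked the elimination (using that $\overline{S_1}=S_1+c$ iff $\overline{S_1}=S_1-c$, the $(2,3)$ and $(3,2)$ identities do reduce to shifts congruent to $l_1$ and $l-l_1$ modulo the invariances) and the $2$-adic bookkeeping ($g\nmid l_1$ iff $v_2(g)=\gamma_1+1$ iff $\gamma_1<\gamma_2,\gamma_3,\gamma$; and $g/2=\gcd(l_1,l_2/2,l_3/2,l/2)$), so both the criterion and the count match the paper, including the degenerate cases via $\gamma_i=\infty$. Your approach buys a uniform treatment of $l=0$ and $l\neq 0$ (the paper needs a separate cycle-graph construction) and makes the count transparent as the number of alternating sets under a fixed-point-free involution, at the cost of the sign/substitution bookkeeping you anticipated. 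One caveat you share with the paper: both arguments identify ${\sf IR}({\cal G}_3)=3/2$ (a limit over finite $T$) with the existence of an exact, defect-free half-density pattern, and count only such perfect periodic patterns; making the ``only if'' direction airtight would require a short defect-counting remark showing that patterns with vanishing defect density cannot approach $3/2$ when the criterion fails, a point the paper also glosses over and which you at least flag explicitly.
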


To clarify, if for example both $\gamma_1=\infty$ and $\gamma_2=\infty$, then the above conditions are not satisfied and ${\sf IR}({\cal G})<3/2$. This theorem provides a necessary and sufficient condition such that all users can transmit half the time without interfering with one another. The most probable way this condition can be met is if $l_1$ is an odd number, and $l,l_2$
and $l_3$ are all even numbers. Each of these events roughly occurs independently with probability $1/2$, hence the
probability all four occur simultaneously is $1/16$. Thus with probability $\gtrsim 1/16$ there exists a feasible
transmit pattern enabling all users to transmit half the time without interfering with one another. There are of course
other ways in which our condition can be met, for example, if $l_1$ is even, but not a multiple of 4, and $l,l_2$ and
$l_3$ are all multiples of 4, however this, and all other configurations satisfying the condition of theorem
\ref{thm:C} likely occur with probability much less than $1/16$. The proof of theorem \ref{thm:C} is given in the
appendix.

\begin{example}
For the channel in figure \ref{fig:chain_graphs}, we have $l=2$ and $l_1 = 1, l_2 = 2, l_3 = 2$. This means $\gamma =
1,\gamma_1=0,\gamma_2=1,\gamma_3=1$, so a feasible transmit pattern achieving independence number $3/2$ exists. As
$\gcd(l_1,l_2/2,l_3/2,l/2)=1$ there are only two feasible transmit patterns: the first is shown in the figure as a
sequence of shaded vertices, the second is obtained by complementing the transmit pattern, i.e. unshading the shaded
vertices, and shading the unshaded ones.
\end{example}

How does theorem \ref{thm:C} generalize for an arbitrary number of users, $K$? Define a {\it cycle} on the interference
graph to be the indices of a tuple of edges with connecting vertices, that start and end on the same row. For example
$((1,2),(2,3),(3,1))$ and $((3,2),(2,3))$ are examples of cycles for $K=3$. The {\it length} of a cycle is the sum of
the normalized cross-delays associated with it. For example, the cycle $((1,2),(2,3),(3,1))$ has length
$l'_{12}+l'_{23}+l'_{31}$.

Define the set of cycles containing an even number of terms as
\begin{equation*}
{\cal Y}_e =  \left\{ \left( (i_1,i_2),(i_2,i_3),\dots,(i_{2n},i_1) \right) : i_1 \neq i_2 \neq \dots \neq i_{2n}
\text{ and } i_j\in\{1,\dots,K\}\right\},
\end{equation*}

\noindent and the set of cycles containing an odd number of terms as
\begin{equation*}
{\cal Y}_o =  \left\{ \left( (i_1,i_2),(i_2,i_3),\dots,(i_{2n+1},i_1) \right) : i_1 \neq i_2 \neq \dots \neq i_{2n+1}
\text{ and } i_j\in\{1,\dots,K\}\right\}.
\end{equation*}

\noindent Then we claim it can be shown that
\begin{claim}\label{thm:chain_graph_result}
${\IR}({\cal G}_K)=K/2$ if and only if the exponents of 2 in the prime factorizations of the lengths of all cycles
containing an odd number of terms, are the same, and this exponent is strictly less than the exponent of 2 in the prime
factorization of the length of every cycle containing an even number of terms. That is, the exponent of 2 in the prime
factorization of
\begin{equation*}
\sum_{(i,j)\in Y_o} l'_{ij},
\end{equation*}

\noindent is the same for all $Y_o \in {\cal Y}_o$, and this value is strictly less than the exponent of 2 in the prime
factorization of
\begin{equation*}
\sum_{(i,j)\in Y_e} l'_{ij},
\end{equation*}

\noindent for any $Y_e\in {\cal Y}_e$.
\end{claim}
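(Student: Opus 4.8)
The plan is to prove both directions by passing to periodic transmit patterns of density exactly $1/2$ per user and analyzing the resulting system of set identities on $\mathbb Z/p$ with discrete Fourier analysis. As recalled in Section~\ref{sec:intgraph}, for $T\to\infty$ an optimal independent set may be taken $p$-periodic for a common period $p$; write $S_j\subseteq\mathbb Z/p$ for the time slots (mod $p$) in which user $j$ transmits and set $f_j=2\,\mathbb 1_{S_j}-1\in\{\pm1\}^{\mathbb Z/p}$. Since edge $e_{ij}$ forbids $v_j(t)$ and $v_i(t+l'_{ij})$ from both being active, a pattern that uses density exactly $\tfrac12$ in every row is feasible iff $S_i=\overline{S_j+l'_{ij}}$ (complement in $\mathbb Z/p$) for all $i\neq j$, equivalently $f_i(x)=-f_j(x-l'_{ij})$, equivalently $\widehat f_i(\chi)=-\overline{\chi(l'_{ij})}\,\widehat f_j(\chi)$ for every character $\chi$ of $\mathbb Z/p$. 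Composing these relations around a cycle $Y=((i_1,i_2),\dots,(i_m,i_1))$ of length $\ell=\sum l'_{i_ki_{k+1}}$ gives $\widehat f_{i_1}(\chi)\bigl(1-(-1)^m\overline{\chi(\ell)}\bigr)=0$; because the row-graph is connected, if $\widehat f_1(\chi)\neq 0$ then $\widehat f_j(\chi)\neq0$ for all $j$, so $\chi(\ell)=(-1)^m$ for every cycle, i.e.\ $\chi(\ell)=+1$ for $Y\in\mathcal Y_e$ and $\chi(\ell)=-1$ for $Y\in\mathcal Y_o$. This dictionary is the backbone of both directions.

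\textbf{Necessity.} Assume $\IR(\mathcal G_K)=K/2$ (take $K\ge 3$; the case $K=2$ is vacuous). Each two-row subgraph of $\mathcal G$ is bipartite with all degrees $\le 2$, hence (in the periodic picture) a disjoint union of even cycles and matchings whose independence number is exactly half its vertices, so the Lemma's bound is just the sum of these pairwise bounds. Equality therefore forces the optimal periodic pattern to restrict to a maximum independent set on every pair of rows, i.e.\ to alternate on every component; this yields simultaneously $S_i=\overline{S_j+l'_{ij}}$ for all $i\neq j$ and $|S_i|+|S_j|=p$ for all pairs, whence $|S_i|=p/2$ for all $i$ and $\widehat f_1(\chi_0)=0$. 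By Parseval some nontrivial character $\chi_a$ has $\widehat f_1(\chi_a)\neq0$; let $p'$ be its order. An odd cycle exists (e.g.\ $(1,2,3,1)$), forcing $\chi_a(\ell)=-1$ for some $\ell$, so $p'$ is even. Writing $\chi_a(\ell)=(-1)^m$ explicitly and reducing modulo $\gcd(a,p)$ shows every odd-cycle length satisfies $\ell\equiv p'/2\pmod{p'}$—so they all have the same $2$-adic valuation $\gamma:=v_2(p')-1$—while every even-cycle length is divisible by $p'$, hence has valuation $\ge v_2(p')>\gamma$. This is exactly the stated condition.

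\textbf{Sufficiency.} Conversely, suppose every odd-cycle length has valuation $\gamma$ and every even-cycle length valuation $>\gamma$ (so $\gamma<\infty$). Put $p=2^{\gamma+1}$; then in $\mathbb Z/p$ every even-cycle length is $\equiv 0$ and every odd-cycle length is $\equiv 2^{\gamma}$. Pick any $S_1\subseteq\mathbb Z/p$ containing exactly one of each pair $\{x,x+2^{\gamma}\}$, so $|S_1|=p/2$ and $S_1=\overline{S_1+2^{\gamma}}$; fix a spanning tree of the row-graph rooted at $1$ and define $S_j$ by propagating $S_j:=\overline{S_i+l'_{ij}}$ along tree edges. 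By induction $S_j=\sigma_j(S_1+L_j)$ with $\sigma_j\in\{\mathrm{id},\overline{(\cdot)}\}$ according to the parity of the tree-distance to $j$. For a non-tree edge $\{i,j\}$ the identity $S_i=\overline{S_j+l'_{ij}}$ becomes, after substitution and tracking the net shift around the cycle it closes, either $S_1=S_1+\ell$ with $\ell\equiv 0$ (an even cycle—trivially true) or $S_1=\overline{S_1+\ell}$ with $\ell\equiv 2^{\gamma}$ (an odd cycle—true by construction, using also $\overline{S_1}=S_1+2^{\gamma}$). Hence all feasibility identities hold, giving a $p$-periodic feasible transmit pattern of density $\tfrac12$ per user, so $\IR(\mathcal G_K)\ge K/2$; with the Lemma, $\IR(\mathcal G_K)=K/2$.

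\textbf{Main obstacle.} The delicate step is the necessity half: turning the asymptotic statement $\IR(\mathcal G_K)=K/2$ into the rigid algebraic form "$S_i=\overline{S_j+l'_{ij}}$ with $|S_i|=p/2$''. This needs the sharp, Erd\H os--R\'enyi-flavoured refinement of the pairwise independence-number bound—that a density-$\tfrac12$ set meeting \emph{every} two-row bound with equality must alternate on every two-row component—together with the reduction of optimal patterns to a common period. Once that structure is in hand, the Fourier/number-theoretic extraction, and all of the sufficiency direction, are routine. A secondary (purely bookkeeping) nuisance is keeping signs consistent between the directed edges $e_{ij}$ and the undirected adjacency they induce when evaluating cycle lengths; one checks that $\ell\mapsto-\ell$ leaves both "$S=\overline{S+\ell}$'' and "$S=S+\ell$'' invariant, so the orientation ambiguity is harmless.
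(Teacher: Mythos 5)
Your proposal addresses a statement the paper itself never proves: Claim \ref{thm:chain_graph_result} is only asserted, and the appendix proves just the $K=3$ analogue (Theorem \ref{thm:C}) by a different method --- mapping the interference graph onto $l$ chain graphs (cycle graphs when $l=0$), reducing feasibility to an in-phase/out-of-phase assignment on cosets of the subgroup of $\mathbb Z/l$ generated by $\gcd(2l_1,l_2,l_3)$, and translating the resulting gcd inequality into $2$-adic valuations via Lemma \ref{lem:alpha}. Your route is genuinely different: you pass to a period-$p$ pattern, convert exact density $1/2$ into the complementation identities $S_i=\overline{S_j+l'_{ij}}$ on $\mathbb Z/p$, extract the valuation condition through characters (necessity), and exhibit an explicit $2^{\gamma+1}$-periodic pattern (sufficiency). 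This scales to arbitrary $K$ with no extra effort, whereas the paper's chain-graph construction is wired to the specific six-step interference cycle of the three-user channel; what the paper's method buys in return is the exact count $2^{\gcd(l_1,l_2/2,l_3/2,l/2)}$ of optimal patterns, which your argument does not produce (and the claim does not require). Both arguments lean on the informal step, inherited from the dynamic-programming discussion of Section \ref{sec:intgraph}, that the independence rate is attained by a periodic pattern; you should state explicitly that this is the max-mean-cycle property of the finite state graph.

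Two points in your sufficiency direction need tightening. First, each unordered pair $\{i,j\}$ carries two directed constraints, $S_i=\overline{S_j+l'_{ij}}$ and $S_j=\overline{S_i+l'_{ji}}$; your tree propagation imposes only one per tree pair, so the reverse constraints --- including those on tree edges --- must also be verified. They close two-term (even) cycles, so the hypothesis covers them, but only because $v_2(l'_{ij}+l'_{ji})>\gamma$ is part of it; as written, your check over ``non-tree edges'' does not mention them. Second, when you close a constraint through the tree, the accumulated shift is a \emph{signed} sum (tree-path delays enter with a minus sign when traversed upward), which is not literally the length of a cycle in ${\cal Y}_e\cup{\cal Y}_o$; your remark that $\ell\mapsto-\ell$ is harmless handles only a global sign flip, not mixed signs. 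The repair is again the even two-cycle relation $l'_{ij}+l'_{ji}\equiv 0\pmod{2^{\gamma+1}}$, which lets you replace $-l'_{ij}$ by $l'_{ji}$ and turn every such signed sum into a genuine cycle length of the correct parity. With these (hypothesis-supplied) repairs, and a cleaned-up statement of the character computation in the necessity half (the reduction is modulo the order $p'=p/\gcd(a,p)$, not modulo $\gcd(a,p)$), your proof goes through.
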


\section{Achieving Non-Vanishing Spectral Efficiency}\label{sec:non_vanish_se}

We now prove theorem \ref{thm:mr} by presenting a construction with expected spectral efficiency that can be made to
vanish arbitrarily slowly as $K\rightarrow \infty$. First, a high-level overview of the proof. The idea is to construct
a transmit pattern that has close to $O(1)$ independence rate as $K\rightarrow \infty$. At the heart of the transmit
pattern is a {\it generalized arithmetic progression}. If the bandwidth scales appropriately with the number of users
then this progression will have desirable interference alignment properties, but care has to be taken in constructing a
transmit pattern out of it. In particular, the progression will be very sparse, meaning that many identical versions of
this progression must be interleaved, each with a different timing offset. The trick to making the analysis work is to
use a randomization argument to show that a good set of offsets exists.

\begin{proof}
(of Theorem \ref{thm:mr}) First some notation that will be used throughout the proof. Let
\begin{equation*}
N\triangleq K(K-1)
\end{equation*}

\noindent and
\begin{equation*}
A \triangleq \frac{L}{N^{N+\epsilon}}.
\end{equation*}

\noindent We use $\oplus$ to denote addition modulo $L$. Let
\begin{equation*}
{\cal T} \triangleq \left\{ \bigoplus_{1\le i \neq j \le K} \alpha_{ij}l_{ij} : \{\alpha_{ij}\}_{i\neq j} \in
\{0,\dots,N-1\}^N \right\}.
\end{equation*}

\noindent This is the set of all linear combinations of the cross-delays (not normalized) with integer coefficients
ranging from 0 to $N$. Define
\begin{equation*}
{\cal S} \triangleq \bigcup_{a=1}^{A}(m_a \oplus {\cal T})
\end{equation*}

\begin{figure}
\centering
\includegraphics[width=400pt]{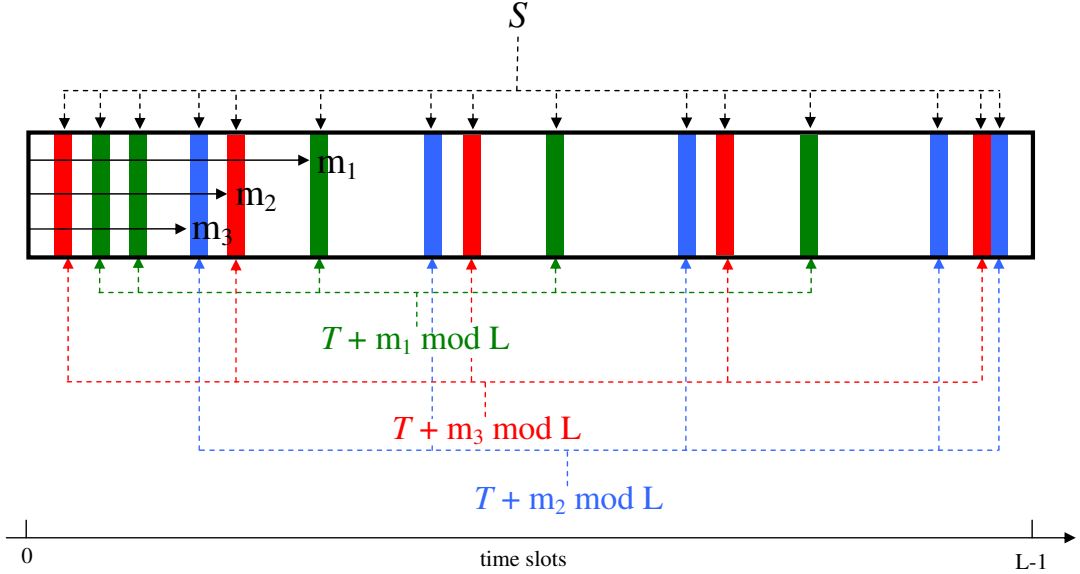}
\caption{The construction ${\cal S}$ is formed by interleaving a sufficient number of generalized arithmetic
progressions $\cal T$ with random offsets $m_i$. The colored bars indicate time slots during which data symbols are
sent.} \label{fig:interleaving}
\end{figure}

\noindent See figure \ref{fig:interleaving} for an illustration. Each user transmits one data symbol at each time slot in the set
\begin{equation*}
{\cal X} \triangleq \bigcup_{k=0}^\infty ({\cal S}+kL).
\end{equation*}

\noindent The above construction corresponds to concatenating data blocks $\cal S$ of length $L$. The modulo $L$
addition used in constructing $\cal S$ and $\cal T$, ensures a seamless transition at the block boundaries. This is
illustrated in figure \ref{fig:blocks}. The construction is defined in this seemingly convoluted way in order to make
the analysis simple and elegant. However there is an easier way of conceptualizing this construction: take multiple
copies of the generalized arithmetic progression $\left\{ \sum_{1\le i \neq j \le K} \alpha_{ij}l_{ij} :
\{\alpha_{ij}\}_{i\neq j} \in \{0,\dots,N-1\}^N \right\}$, and throw them down on the infinite time axis with offsets
$m_1,\dots,m_A,L+m_1,\dots,L+m_A,2L+m_1,\dots,2L+m_A,\dots$. Although this construction is periodic with period $L$,
locally, the offsets of these progressions will appear as a Poisson process with intensity $A/L = 1/N^{N+\epsilon}$. As
there are $N^N$ points in each progression, the density of points in $\cal X$ will be $1/N^{\epsilon}$ and hence the
spectral efficiency will go to zero with $K$ like $1/(K(K-1))^\epsilon$.

We will show that there exists a choice of
\begin{equation*}
(m_1,\dots,m_A) \in \{0,\dots,L-1\}^A
\end{equation*}

\noindent such that the expected spectral efficiency of this scheme approaches (\ref{eqn:Erate}) as
$K\rightarrow\infty$. More specifically, we show that for the above construction, at each receiver the expected
fraction of time slots containing a data symbol but no interference is large. Each such data symbol is then able to
convey $\log_2({1 + |h_{ii}|^2\PSD/N_0})$ bps/Hz of information and the expected spectral efficiency achieved by the
scheme for user $i$ is the fraction of such time slots multiplied by $\log_2({1 + \PSD/N_0})$.

Since our construction $\cal X$ consists of a concatenation of identical blocks of length $L$, we analyze its
performance over a single block extending from time slot $0$ to $L-1$. At receiver $i$ the set of time slots containing
interference is
\begin{equation}\label{eqn:Fi}
{\cal F}_i \triangleq \bigcup_{j=1, j\neq i}^K \left( {\cal S} \oplus l_{ij}\right).
\end{equation}

\begin{figure}
\centering
\includegraphics[width=420pt]{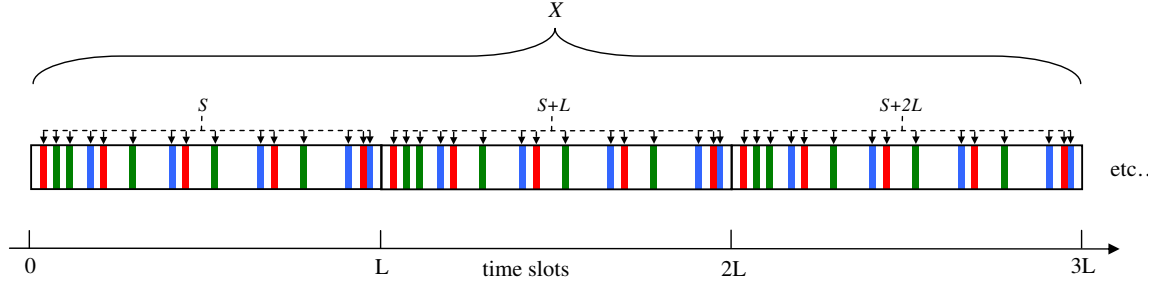}
\caption{The construction ${\cal X}$ is formed by concatenating blocks of $\cal S$. The modulo $L$ structure of $\cal S$ ensures a seamless transition from block to block.} \label{fig:blocks}
\end{figure}

Denote the number of time slots at receiver $i$, that contain a data symbol from transmitter $i$, but no interference
by $S_i \triangleq \left| \{ t\in ({\cal S}\oplus l_{ii})\backslash{\cal F}_i \} \right|$. Then conditioning on the
cross delays we have
\begin{equation}\label{eqn:ENi}
\mathbb E S_i = \frac{1}{L^{K(K-1)}}\sum_{\{l_{ij}\}_{i\neq j}}\mathbb E_{\{l_{ii}\}_i} [S_i | \{l_{ij}\}_{i\neq j}].
\end{equation}

\noindent Define
\begin{equation*}
s(k) \triangleq \left\{
         \begin{array}{ll}
           1, & \hbox{if $k\in {\cal S}$} \\
           0, & \hbox{otherwise}.
         \end{array}
       \right.
\end{equation*}

\noindent That is, $s(k)=1$ if a transmission takes place at time slot $k$, and zero otherwise. Similarly define
\begin{equation*}
f_i(k) \triangleq \left\{
         \begin{array}{ll}
           1, & \hbox{if $k \in {\cal F}_i$} \\
           0, & \hbox{otherwise}.
         \end{array}
       \right.
\end{equation*}

\noindent That is, $f_i(k)=1$ if there is interference during time slot $k$ at receiver $i$ and zero otherwise. Conditioned
on the cross delays, $S_i$ is the correlation function between the set of transmit times $\cal S$ and the set of
interference free times ${\cal F}_i^c$, evaluated at an offset of $l_{ii}$, specifically
\begin{align*}
S_i(l_{ii}) &= \sum_{k=0}^{L-1} s(k\oplus l_{ii})(1-f_i(k))
\end{align*}

\noindent Thus
\begin{align*}
E_{\{l_{ii}\}_i} [S_i | \{l_{ij}\}_{i\neq j}] &= \frac{1}{L}\sum_{l_{ii}=0}^{L-1} S_i(l_{ii}) \\
&= \frac{1}{L}\sum_{l_{ii}=0}^{L-1}\sum_{k=0}^{L-1} s(k\oplus l_{ii})(1-f_i(k)) \\
&= \frac{1}{L}\sum_{k=0}^{L-1} (1-f_i(k))\sum_{l_{ii}=0}^{L-1}s(k\oplus l_{ii}) \\
&= \frac{1}{L}\sum_{k=0}^{L-1} (1-f_i(k))\sum_{l_{ii}=0}^{L-1}s(l_{ii}) \\
&= \frac{1}{L}\sum_{k=0}^{L-1} (1-f_i(k))|{\cal S}| \\
&= |{\cal S}|\left(1-\frac{|{\cal F}_i|}{L}\right)
\end{align*}

\noindent where in the above sequence of equations we have used the identities $|{\cal S}| \equiv \sum_{k=0}^{L-1}s(k)$
and $|{\cal F}_i| \equiv \sum_{k=0}^{L-1}f_i(k)$. Substituting back into equation (\ref{eqn:ENi}) we find the fraction
of time slots at receiver $i$ containing data but no interference is
\begin{align}\label{eqn:ESoverL}
\frac{\mathbb ES_i}{L} &= \frac{1}{L^{K(K-1)}}\sum_{\{l_{ij}\}_{i\neq j}} \frac{|{\cal S}|}{L}\left(1-\frac{|{\cal
F}_i|}{L}\right).
\end{align}

\noindent The above expression makes intuitive sense as if we uniformly select a time slot at random from
$\{0,\dots,L-1\}$, then conditioned on the $\{l_{ij}\}_{i\neq j}$, the quantity $|{\cal S}|/L$ is the probability this
time slot contains a data symbol, and $1-|{\cal F}_i|/L$ is the probability it does not contain an interference
symbol. We now compute appropriate bounds on the terms $|{\cal S}|$ and $|{\cal F}_i|$. From equation (\ref{eqn:Fi}) we have
\begin{align*}
{\cal F}_i &= \bigcup_{j\neq i} ({\cal S} \oplus l_{ij}) \\
&= \bigcup_{j\neq i} \left( \left[\bigcup_{a=1}^A(m_a\oplus{\cal T})\right] \oplus l_{ij} \right) \\
&= \bigcup_{a=1}^A  \left[ \bigcup_{j\neq i}({\cal T} \oplus l_{ij}) \right] \oplus m_a
\end{align*}

\noindent But
\begin{equation*}
\bigcup_{j\neq i}({\cal T} \oplus l_{ij}) \subset \left\{ \bigoplus_{1\le i \neq j \le K} \alpha_{ij}l_{ij} : \{\alpha_{ij}\}_{i\neq j} \in
\{0,\dots,N\}^N \right\},
\end{equation*}

\noindent and this set has at most $(N+1)^N$ elements. This is the interference alignment property. Hence
\begin{align*}
\frac{|{\cal F}_i|}{L} &\le A \frac{1}{L}\left| \bigcup_{j\neq i} ( {\cal T} \oplus l_{ij} ) \right| \\
&\le \frac{A(N+1)^N}{L} \\
&< \frac{(N+1)^N}{N^{N+\epsilon}} \\
&= N^{-\epsilon}\left(1+\frac{1}{N}\right)^N \\
&< e N^{-\epsilon}.
\end{align*}

We now bound $|{\cal S}|$. We first show that $|{\cal T}| = N^N$ almost surely as $N\rightarrow \infty$. In order to have $|{\cal T}| < N^N$, there must exist two sets of coefficients $\{\alpha_{ij}\}_{i\neq j}\neq
\{\alpha_{ij}\}'_{i\neq j}$ both elements of $\{0,\dots,N-1\}^N$, satisfying
\begin{equation*}
\bigoplus_{i\neq j} \alpha_{ij}l_{ij} = \bigoplus_{i\neq j} \alpha_{ij}'l_{ij}.
\end{equation*}

\noindent This is equivalent to requiring there to exist some $\{\overline \alpha_{ij}\}_{i \neq j} \in
\{-N+1,\dots,N-1\}^N\backslash {\bf 0}$ satisfying
\begin{equation*}
\bigoplus_{i\neq j} \overline \alpha_{ij} l_{ij} = 0.
\end{equation*}

\noindent Using the union bound we have
\begin{align*}
\Pr\left(|{\cal T}|<N^N\right) &= \Pr \left( \exists \{\overline \alpha_{ij}\}_{i\neq j} \in
\{-N+1,\dots,N-1\}^N\backslash {\bf 0} \text{ s.t. } \bigoplus_{i\neq j}\overline \alpha_{ij}l_{ij} =0 \right) \\
&\le \sum_{\{\overline \alpha_{ij}\}_{i\neq j} \in \{-N+1,\dots,N-1\}^N} \Pr \left( \bigoplus_{i\neq j}\overline
\alpha_{ij}l_{ij} =0 \right).
\end{align*}

\noindent As conditioned on all cross delays other than $l_{12}$, there is at most one value of $l_{12}$ that satisfies
$\sum_{i\neq j}\overline \alpha_{ij}l_{ij} =0$, and the cross delays are uniformly distributed over $\{0,\dots,L-1\}$,
we have
\begin{align*}
\Pr\left(|{\cal T}|<N^N\right) &\le \sum_{\{\overline \alpha_{ij}\}_{i\neq j} \in \{-N+1,\dots,N-1\}^N}
\frac{1}{L} \\
&= \frac{(2N)^N}{L} \\
&\le \frac{(2N)^N}{(2N)^{N+\epsilon}} \\
&= (N)^{-\epsilon} \\
&\rightarrow 0
\end{align*}

\noindent as $N\rightarrow \infty$ (or equivalently, as $K\rightarrow \infty$).

At this point it should start to become clear why it is that in this particular construction the bandwidth must scale
like $(2N)^N$. From the above calculation we see that in order to make all the points in the generalized arithmetic
progression $\cal T$ distinct, we require the $l_{ij}$ to be large. How large? It may seem that as there are $N^N$
integers in $\cal T$, the minimum being 0 and the maximum being roughly the same order as $l_{ij}$, we require $l_{ij}
= O(N^N)$. However, the structure of the generalized arithmetic progression is such that the bulk of its points are
concentrated around the center, such that we actually require at least $l_{ij}=O((2N)^N)$ to separate these center
points out, as the above calculation shows. But such a large order of $l_{ij}$ makes $\cal T$ very sparse, in fact if
$l_{ij} = O((2N)^N)$ then $\cal T$'s density is a mere $O(2^{-N})$. So to fill in the gaps we interleave multiple
sequences $\cal T$. How many? $O(2^N)$. In general if $l_{ij} = O(L)$ then we must interleave $O(L/N^N)=A$ sequences.
This explanation is illustrated in figure \ref{fig:distinct}

\begin{figure}
\centering
\includegraphics[width=420pt]{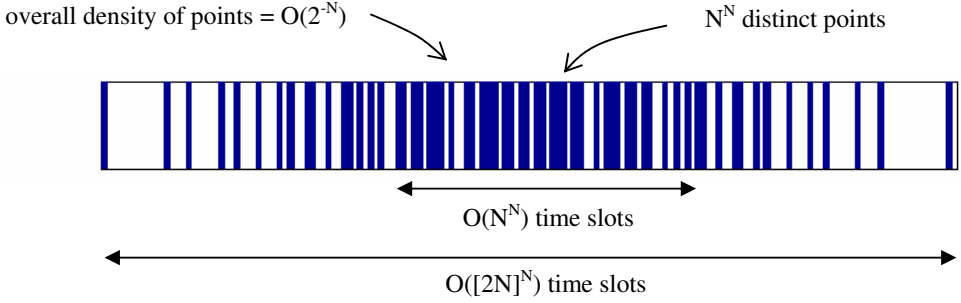}
\caption{An illustration of the generalized arithmetic progression $\cal T$ without the modulo $L$ wrap around. There
are $N^N$ points spread over $O((2N)^N)$ time slots, however almost all points are concentrated at the center in a
width of $O(N^N)$. Hence the density of points is $O(2^{-N})$.} \label{fig:distinct}
\end{figure}

We now use a probabilistic argument to demonstrate the existence of a good choice of $(m_1,\dots,m_A)$. Let $m_a\sim$
i.i.d. $U(\{0,\dots,L-1\})$. Let
\begin{equation*}
t(k) \triangleq \left\{
                  \begin{array}{ll}
                    1, & \hbox{if $k\in {\cal T}$} \\
                    0, & \hbox{otherwise.}
                  \end{array}
                \right.
\end{equation*}

\noindent and
\begin{equation*}
t_a(k) \triangleq \left\{
                  \begin{array}{ll}
                    1, & \hbox{if $k\in m_a\oplus{\cal T}$} \\
                    0, & \hbox{otherwise.}
                  \end{array}
                \right.
\end{equation*}

\noindent Thus $t_a(k) = t(k\oplus m_a)$. Then we can write
\begin{multline*}
s(k) = t_1(k) + t_2(k)(1-t_1(k)) + t_3(k)(1-t_2(k))(1-t_1(k)) \\
+ \dots + t_A(k)(1-t_{A-1}(k))\times\dots \times(1-t_1(k)).
\end{multline*}

\noindent This expression says that $k\in {\cal S}$ if $k\in m_1\oplus{\cal T}$, or if $k\notin m_1\oplus{\cal T}$ but $k\in m_2\oplus{\cal T}$, or if $k\notin m_1\oplus{\cal T}$ and $k\notin m_2\oplus {\cal T}$ but $k\in m_3\oplus{\cal T}$, etc... We can write the above expression alternatively as
\begin{align*}
s(k) &= t_1(k)+(1-t_1(k))(t_2(k)+(1-t_2(k))(t_3(k)+\dots)) \\
&= t(k\oplus m_1)+(1-t(k\oplus m_1))(t(k\oplus m_2)+(1-t(k\oplus m_2))(t(k\oplus m_3)+\dots)) \\
\end{align*}

\noindent Then taking the expectation over the distribution of $m_1,\dots,m_A$
\begin{align*}
\frac{\mathbb E|{\cal S}|}{L} &= \frac{1}{L} \mathbb E \sum_{k=0}^{L-1} s(n) \\
&= \frac{1}{L} \mathbb E \sum_{k=0}^{L-1} t(k\oplus m_1)+(1-t(k\oplus m_1))(t(k\oplus m_2) \\
&\quad\quad\quad\quad\quad\quad\quad\quad\quad\quad\quad\quad\quad\quad +(1-t(k\oplus m_2))(t(k\oplus m_3)+\dots)) \\
&= \frac{1}{L^{A+1}} \sum_{m_A=0}^{L-1} \cdots \sum_{m_1=0}^{L-1} \sum_{k=0}^{L-1} t(k\oplus m_1)+(1-t(k\oplus m_1))(t(k\oplus m_2) \\
&\quad\quad\quad\quad\quad\quad\quad\quad\quad\quad\quad\quad\quad\quad +(1-t(k\oplus m_2))(t(k\oplus m_3)+\dots)) \\
&= \frac{1}{L^{A+1}} \sum_{k=0}^{L-1}\sum_{m_A=0}^{L-1} \cdots \sum_{m_1=0}^{L-1} t(k\oplus m_1)+(1-t(k\oplus m_1))(t(k\oplus m_2) \\
&\quad\quad\quad\quad\quad\quad\quad\quad\quad\quad\quad\quad\quad\quad +(1-t(k\oplus m_2))(t(k\oplus m_3)+\dots)) \\
&= \frac{1}{L^{A+1}} \sum_{k=0}^{L-1}\sum_{m_A=0}^{L-1} \cdots \sum_{m_2=0}^{L-1}|{\cal T}|+(L-|{\cal T}|)t(k\oplus m_2) \\
&\quad\quad\quad\quad\quad\quad\quad\quad\quad\quad\quad\quad\quad\quad +(L-|{\cal T}|)(1-t(k\oplus m_2))(t(k\oplus m_3)+\dots)) \\
&= \frac{1}{L^{A+1}} \sum_{k=0}^{L-1}\sum_{m_A=0}^{L-1} \cdots \sum_{m_2=0}^{L-1}|{\cal T}|+|{\cal T}|(L-|{\cal T}|) \\
&\quad\quad\quad\quad\quad\quad\quad\quad\quad\quad\quad\quad\quad\quad +(L-|{\cal T}|)^2(t(k\oplus m_3)+\dots)) \\
&\quad\quad\quad\quad\quad\quad\quad\quad\quad\quad\quad\quad\quad\quad \vdots \\
&= \frac{|{\cal T}|}{L}\sum_{a=0}^{A-1}\left( 1 - \frac{|{\cal T}|}{L} \right)^a \\
&= 1 - \left(1-\frac{|{\cal T}|}{L}\right)^A \\
&\rightarrow 1 - \left(1-\frac{N^N}{L}\right)^{\frac{L}{N^{N+\epsilon}}} \quad\quad \text{ a.s.} \\
&\rightarrow 1 - e^{-\frac{N^N}{L}\frac{L}{N^{N+\epsilon}}} \\
&\rightarrow N^{-\epsilon}.
\end{align*}

\noindent Substituting back into equation (\ref{eqn:ESoverL})
\begin{align*}
\frac{\mathbb ES_i}{L} &= \frac{1}{L^{K(K-1)}}\sum_{\{l_{ij}\}_{i\neq j}} N^{-\epsilon}\left(1-eN^{-\epsilon}\right) \\
&= N^{-\epsilon}\left(1-eN^{-\epsilon}\right) \\
&\rightarrow N^{-\epsilon}
\end{align*}

\noindent as $N\rightarrow \infty$ (or equivalently $K\rightarrow \infty$). As each data symbol that is received
without interference is capable of reliably communicating $\log_2(1+|h_{ii}|^2{\sf PSD/N_0})$ bps/Hz, the expected
spectral efficiency of each user $i$ goes to
\begin{equation*}
\frac{1}{(K(K-1))^{\epsilon}} \log_2\left(1+|h_{ii}|^2{\sf PSD/N_0}\right)
\end{equation*}

\noindent as $K\rightarrow \infty$.

\end{proof}

The proof of theorem \ref{thm:mr2} is a straightforward extension of the previous.
\begin{proof}
(of Theorem \ref{thm:mr2}) Assume without loss of generality that $\arg \max_{d} |h_{ii,d}|^2 = 1$ for all users $i$. If each receiver $i$ treats physical paths $2,3,\dots,D$ from transmitter $i$ as
interference, then the received signals in the $K$ user $D$ path interference channel are statistically identical to
those of the $DK$-user LOS interference channel. Thus the achievability result of theorem \ref{thm:mr} carries
over to the $D$ path channel with $K$ replaced by $DK$, and $|h_{ii}|^2$ replaced by $\max_{d \in
\{1,\dots,D\}}|h_{ii,d}|^2$.
\end{proof}

\section{Frequency Domain Interpretation}\label{sec:fdomain}

In this section we reconcile the time domain version of interference alignment presented in this paper, and the
frequency domain results of \cite{Jafar1}. Specifically we show how the three-user construction of \cite{Jafar1} has a
simple time domain structure for the LOS interference channel.

To begin, we need to transform the LOS model into the frequency domain. For this we use an OFDM architecture summarized
in figure \ref{fig:ofdm}. Transmitter $j$ has a stream of complex data symbols to send $\{x_j[0],x_j[1],\dots\}$ to
receiver $i$. These are broken up into blocks of length $n$. Consider a single block denoted ${\bf x} =
[x_j[0],\dots,x_j[n-1]]^T$. To send this block the transmitter computes the $M$-length vector $\overline {\bf x}_j =
{\bf V}_j{\bf x}_j$, where ${\bf V}_j \in {\mathbb C}^{M\times n}$ is an encoding matrix to be specified later. Let
$l_{\max} \triangleq \max_{i,j} l_{ij}$. To send $\overline {\bf x}_j$, tx $j$ computes its IDFT and appends a cyclic
prefix of length $l_{\max}$. Each receiver removes the cyclic prefix and computes the DFT. Specifically
\begin{equation*}
\tilde {\bf x}_j = \left(
                     \begin{array}{c}
                     {\bf 0}_{l_{\max}\times (M-l_{\max})} \;\;{\bf I}_{l_{\max}\times l_{\max}} \\
                     {\bf I}_{M\times M}
                     \end{array}
                   \right) {\bf F}_{M\times M}^*\overline {\bf x}_j
\end{equation*}

\noindent
and
\begin{equation*}
{\bf y}_j = {\bf F}_{M\times M}\left(
                     \begin{array}{c}
                     {\bf 0}_{M\times l_{\max}} \;\;{\bf I}_{M\times M}
                     \end{array}
                   \right) \tilde {\bf y}_j
\end{equation*}

\noindent where ${\bf F}_{M\times M}$ is the $M\times M$ DFT matrix,
\begin{equation*}
{\bf F}_{M\times M} = \frac{1}{\sqrt{M}} \left(
            \begin{array}{ccccc}
              1 & 1 & 1 & \cdots & 1 \\
              1 & e^{j2\pi/M} & e^{j2\pi \cdot 2/M} & \cdots & e^{j2\pi\cdot (n-1)/M} \\
              1 & e^{j2\pi\cdot 2/M} & e^{j4\pi\cdot (n-1)\theta} & \cdots & e^{j2\pi\cdot 2(n-1)/M} \\
              \vdots & \vdots & \vdots & \ddots & \vdots \\
              1 & e^{j2\pi\cdot (M-1)/M} & e^{j2\pi\cdot 2(M-1)\theta} & \cdots & e^{j2\pi\cdot (M-1)(n-1)/M} \\
            \end{array}
          \right).
\end{equation*}

\begin{figure}
\centering
\includegraphics[width=420pt]{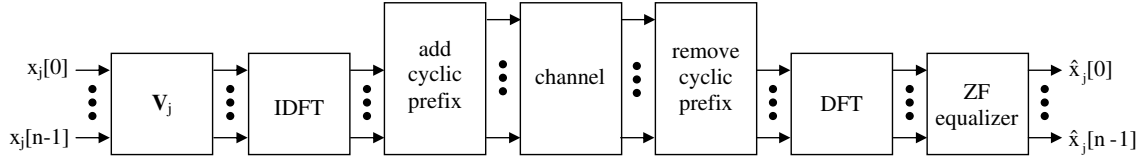}
\caption{Illustration of the OFDM architecture used to reconcile the time and frequency domain versions of interference
alignment.} \label{fig:ofdm}
\end{figure}

\noindent The result is a length $M$ sequence
\begin{equation*}
y_i[k] = \sum_{j=1}^K h_{ij} e^{-j2\pi (f_c \tau_{ij}+kl_{ij}/M)}\overline x_j[k] + z_i[k]
\end{equation*}

\noindent for $k=0,\dots,M-1$. Let $\theta_{ij} \triangleq l_{ij}/M$. Then in matrix form
\begin{equation*}
{\bf y}_i = \sum_{j=1}^K {\bf H}_{ij}{\bf V}_j{\bf x}_j + {\bf z}_i
\end{equation*}

\noindent where the link matrices are
\begin{equation}\label{eqn:ch_matrices}
{\bf H}_{ij}=h_{ij}e^{-j2\pi f_c\tau_{ij}}\left(
              \begin{array}{ccccc}
                1 &  &  &  &  \\
                 & e^{-j2\pi\cdot \theta_{ij}} &  &  &  \\
                 &  & e^{-j2\pi\cdot 2\theta_{ij}} &  &  \\
                 &  &  & \ddots &  \\
                 &  &  &  & e^{-j2\pi\cdot (M-1)\theta_{ij}} \\
              \end{array}
            \right).
\end{equation}

\noindent Choose $W$ sufficiently large such that the $l_{ij}$ are distinct. Note that $M$ needs to be much larger than $l_{\max}$ in order for the overhead from the cyclic prefix to be small. Also, $M$ must not have any of the $l_{ij}$ as divisors, else the channel matrices will lose rank. Let $\overline {\bf H} \triangleq h_{ij}^{-1}e^{j2\pi f_c \tau_{ij}}{\bf H}$ and ${\bf T}
\triangleq \overline {\bf H}_{12}\overline {\bf H}_{21}^{-1}\overline {\bf H}_{23}\overline {\bf H}_{32}^{-1}\overline
{\bf H}_{31}\overline {\bf H}_{13}^{-1}$. Let ${\bf w} = [1 \; \cdots \; 1]^T$. Choose the encoding matrices as follows
\begin{align*}
{\bf V}_1 &= \overline {\bf H}_{31}^{-1} \overline{\bf H}_{32}{\bf T}^2{\bf V} \\
{\bf V}_2 &= {\bf T} {\bf V} \\
{\bf V}_3 &= \overline {\bf H}_{13}^{-1}\overline {\bf H}_{12} {\bf V}
\end{align*}

\noindent where
\begin{equation*}
{\bf V} = [{\bf w} \; {\bf Tw}\; {\bf T}^2{\bf w} \; \cdots \; {\bf T}^{n-1}{\bf w}]
\end{equation*}

\noindent and let
\begin{equation*}
n=(M-1)/2.
\end{equation*}

\noindent where $M$ will be chosen to be an odd number. This is the three user construction of \cite{Jafar1}, but note the channel matrices ${\bf H}_{ij}$ do not consist of $M$ independently faded tones. Rather, all tones are derived from a single parameter $l_{ij}$. Define
\begin{align*}
\theta &= \theta_{12} - \theta_{21} + \theta_{23} - \theta_{32} + \theta_{31} - \theta_{13} \\
l &= l_{12} - l_{21} + l_{23} - l_{32} + l_{31} - l_{13}
\end{align*}

\noindent Then
\begin{equation}\label{eqn:V}
{\bf V} = \frac{1}{\sqrt{M}}\left(
            \begin{array}{ccccc}
              1 & 1 & 1 & \cdots & 1 \\
              1 & e^{j2\pi\cdot \theta} & e^{j2\pi \cdot 2\theta} & \cdots & e^{j2\pi\cdot (n-1)\theta} \\
              1 & e^{j2\pi\cdot 2\theta} & e^{j2\pi\cdot 4\theta} & \cdots & e^{j2\pi\cdot 2(n-1)\theta} \\
              \vdots & \vdots & \vdots & \ddots & \vdots \\
              1 & e^{j2\pi\cdot (M-1)\theta} & e^{j2\pi\cdot 2(M-1)\theta} & \cdots & e^{j2\pi\cdot (M-1)(n-1)\theta} \\
            \end{array}
          \right).
\end{equation}

\begin{lemma}\label{lem:V}
If $M$ is prime the columns of $\bf V$ are a permuted subset of the columns of ${\bf F}_{M\times M}$, i.e.
\begin{equation}
{\bf V} = {\bf F}_{M\times M}{\bf \pi}_{l,M}
\end{equation}

\noindent where ${\bf \pi}_{l,M}$ is an $M\times n$ permutation matrix, i.e. each column of ${\bf \pi}_{l,M}$ is a unique column of ${\bf I}_{M\times M}$.
\end{lemma}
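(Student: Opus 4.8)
The plan is to exhibit the permutation explicitly by reading off entries of $\mathbf{V}$. From (\ref{eqn:V}) we have $\theta = l/M$, so with rows indexed $r=0,\dots,M-1$ and columns indexed $k=0,\dots,n-1$ the $(r,k)$ entry of $\mathbf{V}$ is $\frac{1}{\sqrt M}e^{j2\pi r k\theta}=\frac{1}{\sqrt M}e^{j2\pi r(kl)/M}$. Since this complex exponential depends on $kl$ only through its residue modulo $M$, it equals $\frac{1}{\sqrt M}e^{j2\pi r c/M}$ with $c = kl \bmod M$, which is exactly the $(r,c)$ entry of $\mathbf{F}_{M\times M}$. Hence the $k$-th column of $\mathbf{V}$ is literally the $(kl\bmod M)$-th column of $\mathbf{F}_{M\times M}$, for every $k=0,\dots,n-1$.

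Next I would define $\pi_{l,M}$ to be the $M\times n$ matrix whose $k$-th column is the standard basis vector $\mathbf{e}_{\,kl\bmod M}\in\mathbb{R}^{M}$; then $\mathbf{V}=\mathbf{F}_{M\times M}\,\pi_{l,M}$ holds by construction, and the only remaining point is to check that $\pi_{l,M}$ is a genuine partial permutation matrix, i.e. that its $n$ columns are \emph{distinct} columns of $\mathbf{I}_{M\times M}$. This reduces to showing the map $k\mapsto kl\bmod M$ is injective on $\{0,1,\dots,n-1\}$. Here is where primality of $M$ enters: provided $M\nmid l$ we have $\gcd(l,M)=1$, so $l$ is invertible in $\mathbb{Z}/M\mathbb{Z}$ and $kl\equiv k'l\pmod M$ forces $k\equiv k'\pmod M$; since $0\le k,k'\le n-1=(M-3)/2<M$ this gives $k=k'$. (Only $\gcd(l,M)=1$ is actually needed; primality of $M$ is a convenient sufficient condition, and one sees the hypothesis "$M$ must not have any $l_{ij}$ as divisors" propagating to the combination $l=l_{12}-l_{21}+l_{23}-l_{32}+l_{31}-l_{13}$.)

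The main obstacle — indeed essentially the only subtlety — is the degenerate case $l\equiv 0\pmod M$: then $\theta$ is an integer, every column of $\mathbf{V}$ collapses to $\frac{1}{\sqrt M}\mathbf{w}$, which is the $0$-th column of $\mathbf{F}_{M\times M}$ repeated $n$ times, and the claimed permutation structure fails. So the statement implicitly needs $M\nmid l$ alongside $M$ prime, and I would make this explicit; it is in any case automatic once $W$, and hence $M$, is taken large enough that $M>|l|$, or more economically whenever $M$ is a prime not dividing $l$. Everything else is the routine bookkeeping indicated above.
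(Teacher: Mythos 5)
Your argument is essentially the paper's own: the paper also reads off that the $k$-th column of $\mathbf{V}$ matches a column of $\mathbf{F}_{M\times M}$ via the second-row entry $e^{j2\pi kl/M}$ and invokes the cyclic-group structure of the $M$-th roots of unity (i.e.\ invertibility of $l$ modulo a prime $M$) to get distinctness, which is exactly your injectivity of $k\mapsto kl\bmod M$. Your explicit flagging of the degenerate case $M\mid l$ is a genuine (small) improvement, since the paper's group argument silently assumes $\gcd(l,M)=1$ and, as you note, $M$ not dividing any individual $l_{ij}$ does not by itself force $M\nmid l$, so the hypothesis should indeed be stated.
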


\begin{proof}
Consider the matrix element ${\bf V}(2,k) = e^{j2\pi kl/M}/\sqrt{M} =  e^{j2\pi \cdot (kl\mod M)/M }/\sqrt{M}$ for some
$k\in\{0,\dots,M-1\}$. Let $M$ be prime. Then the set
\begin{equation*}
\{1,e^{j2\pi/M},e^{j2\pi\cdot 2/M},\dots,e^{j2\pi \cdot (M-1)/M}\}
\end{equation*}

\noindent together with the multiplication operation forms a group. Thus each ${\bf V}(2,k)$ corresponds to a unique
${\bf F}_{M\times M}(2,k')$ for some $k'\in\{0,\dots,M-1\}$. Now observe that ${\bf V}(j,k) = {\bf V}(2,k)^{(j-1)}$ and
${\bf F}_{M\times M}(j,k') = {\bf F}_{M\times M}(2,k')^{(j-1)}$. Thus ${\bf F}_{M\times M}(j,k') = {\bf V}_{M\times
M}(j,k)$. In other words each column of $\bf V$ corresponds to a unique column of ${\bf F}_{M\times M}$, which
establishes the result.
\end{proof}

Lemma \ref{lem:V} enables us to write the encoding matrices ${\bf V}_j$ in a revealing form. Define
\begin{align} \label{eqn:Gamma}
{\bf \Gamma}_1 &\triangleq \overline {\bf H}_{31}^{-1}\overline {\bf H}_{32}{\bf T}^2 \\ \label{eqn:Gamma2} {\bf
\Gamma}_2 &\triangleq {\bf T} \\ \label{eqn:Gamma3}
{\bf \Gamma}_3 &\triangleq \overline {\bf H}_{13}^{-1}\overline
{\bf H}_{12}.
\end{align}

\noindent Then
\begin{equation*}
\tilde {\bf x}_j = \left(
                     \begin{array}{c}
                     {\bf 0}_{l_{\max}\times (M-l_{\max})} \;\;{\bf I}_{l_{\max}\times l_{\max}} \\
                     {\bf I}_{M\times M}
                     \end{array}
                   \right) {\bf F}_{M\times M}^*{\bf \Gamma}_j{\bf F}_{M\times M}{\bf \pi}_{l,M}{\bf x}_j.
\end{equation*}

\noindent Examining the above expression reveals that the encoding operation for tx $j$ corresponds to transmitting
consecutive data symbols $l$ time slots apart, but cyclicly wrapped around such that roughly half of all time slots
contain data symbols and no two data symbols share the same time slot. As the operation ${\bf F}_{M\times M}^*{\bf
\Gamma}_j{\bf F}_{M\times M}$ corresponds to delaying the input stream, the entire transmission sequence is just offset
by this amount.

Based on equations (\ref{eqn:Gamma})-(\ref{eqn:Gamma3}) we can define the delay $d_j$ for user $j$'s transmission
sequence as
\begin{align*}
d_1 &\triangleq 2l_{12}-2l_{21}+2l_{23}-l_{32}+l_{31}-2l_{13} \\
d_2 &\triangleq l_{12}-l_{21}+l_{23}-l_{32}+l_{31}-l_{13} \\
d_3 &\triangleq l_{12}-l_{13}.
\end{align*}

\noindent Then we see that transmitter $j$ will send its first symbol $x_j[1]$ in time slot $d_j\mod M$, its second
symbol $x_j[2]$ in time slot $d_j+l \mod M$, its third in time slot $d_j+2l\mod M$, etc... The last symbol will be sent
at time $d_j+(n-1)l\mod M$. More precisely transmitter $j$ sends
\begin{equation*}
\tilde x_j[m] = \left\{
                  \begin{array}{ll}
                    x_j[k], & \hbox{if $m = kl + d_j\mod M$ for some $k$} \\
                    0, & \hbox{otherwise}
                  \end{array}
                \right.
\end{equation*}

\noindent at times $m=0,1,\dots,M-1$.

The manner by which this transmission scheme achieves interference alignment is now simple to understand. Tx 1,
transmits it's first data symbol $x_1[1]$, such that it arrives at rx 3 at the same time as tx 2's first data symbol
$x_2[1]$. Tx 2 transmits $x_2[1]$ such that it arrives at the same time as $x_3[1]$ at rx 1. Tx 3 transmits $x_3[1]$
such that it arrives at rx 2 at the same time as $x_1[2]$, etc... See example \ref{exm:delays} and figure
\ref{fig:3useralign}.

From figure \ref{fig:3useralign} it is clear how decoding should be performed ---receivers merely decode each data
symbol by looking at the time slot in which it was received. But let us reconcile this with the decoding methodology of
\cite{Jafar1}, where the received sequence is passed through a ZF equalizer. This corresponds to projecting the vector
${\bf y}_j$ onto the subspace orthogonal to the interference. At the first receiver
\begin{equation*} {\bf y}_1 = \left(
\begin{array}{cc}
  {\bf H}_{11}{\bf V}_1 & {\bf H}_{12}{\bf U}
  \end{array} \right) \left(
\begin{array}{c}
  {\bf x}_1 \\
  {\bf x}_g
\end{array}\right) + {\bf z}_1
\end{equation*}

\noindent where
\begin{equation*}
{\bf U} = [{\bf w} \; {\bf Tw}\; {\bf T}^2{\bf w} \; \cdots \; {\bf T}^{n}{\bf w}].
\end{equation*}

\noindent and ${\bf x}_g$ represents a combination of interfering symbols from the 2nd and 3rd users. It is
straightforward to see that we can write the space orthogonal to $\bf U$ as
\begin{equation*}
{\bf U}^c = {\bf F}_{M\times M}{\bf \pi}_{l,M}^c
\end{equation*}

\noindent where ${\bf \pi}^c_{l,M}$ is a permutation matrix orthogonal to ${\bf \pi}_{l,M}$ in the sense that ${{\bf
\pi}_{l,M}^c}^*{\bf \pi}_{l,M} = {\bf 0}$. Simply put, ${\bf U}^c$ is a matrix whose columns are those columns of ${\bf
F}_{M\times M}$ that are not present in $\bf V$. Thus the first receiver computes ${\hat {\bf x}}_1 = {\bf G}_1^*{\bf
y}_1$ where
\begin{equation*}
{\bf G}_1^* = ({{\bf U}^c}^*{\bf H}_{12}^{-1}{\bf H}_{11}{\bf V}_1)^{-1}{{\bf U}^c}^*{\bf H}_{12}^{-1}.
\end{equation*}

\noindent If we write out the decoder in detail
\begin{multline*}
{\bf G}_1^*{\bf y}_1 =\\ ({{\bf \pi}_{l,M}^c}^*{\bf F}_{M\times M}^*{\bf H}_{12}^{-1}{\bf H}_{11}{\bf \Gamma}_1{\bf
F}_{M\times M}{\bf \pi}_{l,M})^{-1}{{\bf \pi}_{l,M}^c}^*{\bf F}_{M\times M}^*{\bf H}_{12}^{-1}{\bf
F}_{M\times M}\left(
                     \begin{array}{c}
                     {\bf 0}_{M\times l_{\max}} \;\;{\bf I}_{M\times M}
                     \end{array}
                   \right) \tilde {\bf y}_j
\end{multline*}

\noindent we see that the first few operations correspond to removing the cyclic prefix, delaying the resulting stream
by $l_{12}$ and then selecting the interference free subset of this. The last operation is in general undefined, as the
matrix we invert may not be full rank. However in certain scenarios the matrix equals identity and is then invertible.

The reason for this phenomenon is that the last operation corresponds to recovering the data from the interference free
subspace, which may contain fewer than $n$ dimensions of ${\bf x}_j$. In the case of many independently faded OFDM
sub-channels such as is assumed in \cite{Jafar1}, one could always project the subspace spanned by the data onto the
interference free subspace without losing information, however for single path channels each data dimension is either
orthogonal or overlapping with an interference dimension.

The scenario in which the above decoder is well defined (i.e. data and interference subspaces are orthogonal) is given
by the following condition.
\begin{lemma}
If
\begin{align*}
l_{11} &= d_1+nl\mod M \\
l_{22} &= d_2+nl\mod M \\
l_{33} &= d_3+nl\mod M
\end{align*}

\noindent then the interference subspace is orthogonal to the data subspace.
\end{lemma}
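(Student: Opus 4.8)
The plan is to translate the orthogonality claim into a purely combinatorial statement about supports in $\mathbb{Z}_M$, using the structure exposed by Lemma~\ref{lem:V}. Under the standing assumptions ($M$ prime and large, the $l_{ij}$ distinct) we have $l\not\equiv 0\pmod M$ and $\gcd(l,M)=1$, so Lemma~\ref{lem:V} gives ${\bf V}={\bf F}_{M\times M}{\bf\pi}_{l,M}$, and hence ${\bf H}_{11}{\bf V}_1={\bf H}_{11}{\bf\Gamma}_1{\bf F}_{M\times M}{\bf\pi}_{l,M}$. The matrices ${\bf H}_{11}$ and ${\bf\Gamma}_1$ are diagonal with unit-modulus geometric entries $e^{-j2\pi kl_{11}/M}$ and $e^{-j2\pi k d_1/M}$ respectively (the latter obtained by expanding (\ref{eqn:Gamma}) with ${\bf T}=\mathrm{diag}(e^{-j2\pi kl/M})$ and simplifying to $d_1=2l_{12}-2l_{21}+2l_{23}-l_{32}+l_{31}-2l_{13}$), so their product is, up to an irrelevant scalar, $\mathrm{diag}(e^{-j2\pi k(l_{11}+d_1)/M})$, which acts on ${\bf F}_{M\times M}$ by a circular shift of its columns. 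I would conclude that $\mathrm{span}({\bf H}_{11}{\bf V}_1)$ is the span of the $n$ columns of ${\bf F}_{M\times M}$ indexed by the arithmetic progression $\mathcal D_1=\{\,l_{11}+d_1+kl \bmod M:\ k=0,\dots,n-1\,\}$ (these are distinct since $\gcd(l,M)=1$ and $n<M$).

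Next I would do the same for the interference term ${\bf H}_{12}{\bf U}$. The columns of ${\bf U}=[{\bf w}\ {\bf Tw}\ \cdots\ {\bf T}^n{\bf w}]$ are, up to scale, columns of ${\bf F}_{M\times M}$ indexed by $\{-kl\bmod M:\ k=0,\dots,n\}$, and premultiplying by the diagonal ${\bf H}_{12}$ shifts all these indices by a further $-l_{12}$. Thus $\mathrm{span}({\bf H}_{12}{\bf U})$ is the span of the $n+1$ columns of ${\bf F}_{M\times M}$ indexed by $\mathcal I_1=\{\,-(l_{12}+kl)\bmod M:\ k=0,\dots,n\,\}$. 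The identities $d_2=l$, $d_3=l_{12}-l_{13}$ and $d_2+l_{12}=d_3+l_{13}+l$ established before the lemma are precisely what collapses the interference from transmitters $2$ and $3$ onto this single length-$(n+1)$ progression rather than a generic set of size $2n$ — this is the interference-alignment property already argued in the text.

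Now, distinct columns of the DFT matrix are orthogonal, so $\mathrm{span}({\bf H}_{11}{\bf V}_1)\perp\mathrm{span}({\bf H}_{12}{\bf U})$ if and only if $\mathcal D_1\cap\mathcal I_1=\emptyset$; and since $|\mathcal D_1|+|\mathcal I_1|=n+(n+1)=M$, this is the same as $\mathcal D_1=\mathbb Z_M\setminus\mathcal I_1$. (Equivalently, one can argue entirely in the time domain: after cyclic-prefix removal, the received data and interference are time-domain vectors with supports $\mathcal D_1$ and $\mathcal I_1$; disjoint supports are orthogonal, and the DFT preserves orthogonality.) Both $\mathcal D_1$ and $\mathbb Z_M\setminus\mathcal I_1$ are arithmetic progressions with common difference $l$, so applying the bijection $x\mapsto l^{-1}x$ of $\mathbb Z_M$ turns them into cyclic intervals of lengths $n$ and $M-(n+1)=n$; two cyclic intervals of equal length $<M$ coincide iff their left endpoints agree, and equating left endpoints yields, after simplification using $n=(M-1)/2$, precisely the stated relation $l_{11}\equiv d_1+nl\pmod M$. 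Finally, the constructions of ${\bf V}_j$, ${\bf\Gamma}_j$ and ${\bf U}$ are invariant under the cyclic relabelling $1\to2\to3\to1$ of the users, so running the same argument at receivers $2$ and $3$ gives $l_{22}\equiv d_2+nl$ and $l_{33}\equiv d_3+nl\pmod M$.

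The main obstacle I anticipate is the sign and shift bookkeeping: one must be consistent about the sign convention relating $\theta_{ij}=l_{ij}/M$ to the columns of ${\bf F}_{M\times M}$ (note the apparent sign mismatch between (\ref{eqn:V}) and the entries of ${\bf T}$), about the direction in which the diagonal matrices ${\bf\Gamma}_j$ and ${\bf H}_{ij}$ circularly shift those columns, and about which length-$n$ cyclic interval is the complement of the length-$(n+1)$ interference interval — it is exactly this bookkeeping that produces the clean closed form $l_{jj}\equiv d_j+nl\pmod M$ rather than an expression still involving the cross-delays. A secondary point worth isolating as a sublemma is the collapse of the two interferers onto the single progression $\mathcal I_1$, since both the counting argument $|\mathcal D_1|+|\mathcal I_1|=M$ and the ``endpoints determine the interval'' step rely on $l$ being invertible modulo $M$.
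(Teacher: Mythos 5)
Your overall strategy is the natural one and matches the time-domain picture the paper sketches around this lemma (which the paper states without proof; the only supporting evidence it gives is Example \ref{exm:delays}): express the received data and the aligned interference at each receiver as index sets in $\mathbb{Z}_M$, use $M$ prime and $l$ invertible so the two sets are arithmetic progressions of sizes $n$ and $n+1$ with $n+(n+1)=M$, and characterize disjointness by matching the data progression with the complement of the interference progression. The genuine gap is exactly at the step you defer as ``bookkeeping'': you assert that equating endpoints ``yields precisely the stated relation $l_{11}\equiv d_1+nl\pmod M$,'' but you never perform the computation, and when it is performed consistently it does not give that relation. With the paper's own conventions (transmitter $j$ places $x_j[k]$ in slot $d_j+kl\bmod M$; receiver $i$ sees transmitter $j$ shifted by $l_{ij}$), the data at receiver $1$ occupies $\{l_{11}+d_1+kl \bmod M\}_{k=0}^{n-1}$ and the aligned interference occupies $\{l_{12}+kl \bmod M\}_{k=0}^{n}$ (using $d_2=l$ and $d_3=l_{12}-l_{13}$), and since $M=2n+1$ the two sets are disjoint iff $l_{11}+d_1\equiv l_{12}+(n+1)l\pmod M$, i.e. $l_{11}\equiv l_{12}-d_1+(n+1)l$, which is not the stated $l_{11}\equiv d_1+nl$ for generic cross delays. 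Example \ref{exm:delays} makes this concrete: there one checks directly that data and interference are disjoint at all three receivers, and indeed $l_{11}=0\equiv l_{12}-d_1+(n+1)l\pmod{13}$, whereas $d_1+nl\equiv 5\pmod{13}$; moreover, for fixed cross delays exactly one residue of $l_{11}$ modulo $M$ yields disjointness, so $l_{11}=5$ would actually destroy orthogonality in that example. In other words, the identity your proof is supposed to land on is not the one the calculation produces; a complete proof must either adopt whatever (unstated) sign/offset convention reconciles the two, or conclude that the displayed right-hand sides need correction --- it cannot simply assert the match.

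Two smaller issues of the same kind. First, your index sets are internally inconsistent as written: the data columns are indexed by $+(l_{11}+d_1+kl)$ but the interference columns by $-(l_{12}+kl)$; with one fixed convention both sets carry the same sign, and intersecting the sets as literally written would give yet another, different disjointness condition --- this is precisely the sign hazard you flag but do not resolve. Second, the appeal to invariance under the relabelling $1\to2\to3\to1$ is not valid: the precoders (\ref{eqn:Gamma})--(\ref{eqn:Gamma3}) are not cyclically symmetric, so at receivers $2$ and $3$ one must verify separately that the two interfering progressions are offset by exactly $l$ (they are: $(l_{21}+d_1)-(l_{23}+d_3)=l$ and $(l_{31}+d_1)-(l_{32}+d_2)=l$) and then repeat the endpoint computation per receiver; this is a short calculation, not a symmetry argument.
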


\noindent Thus if each of the direct delays takes on a single, specific value, the signal space will be orthogonal to
the interference space at each of the receivers and we will be able to decode all data symbols.

\begin{figure}
\centering
\includegraphics[width=365pt]{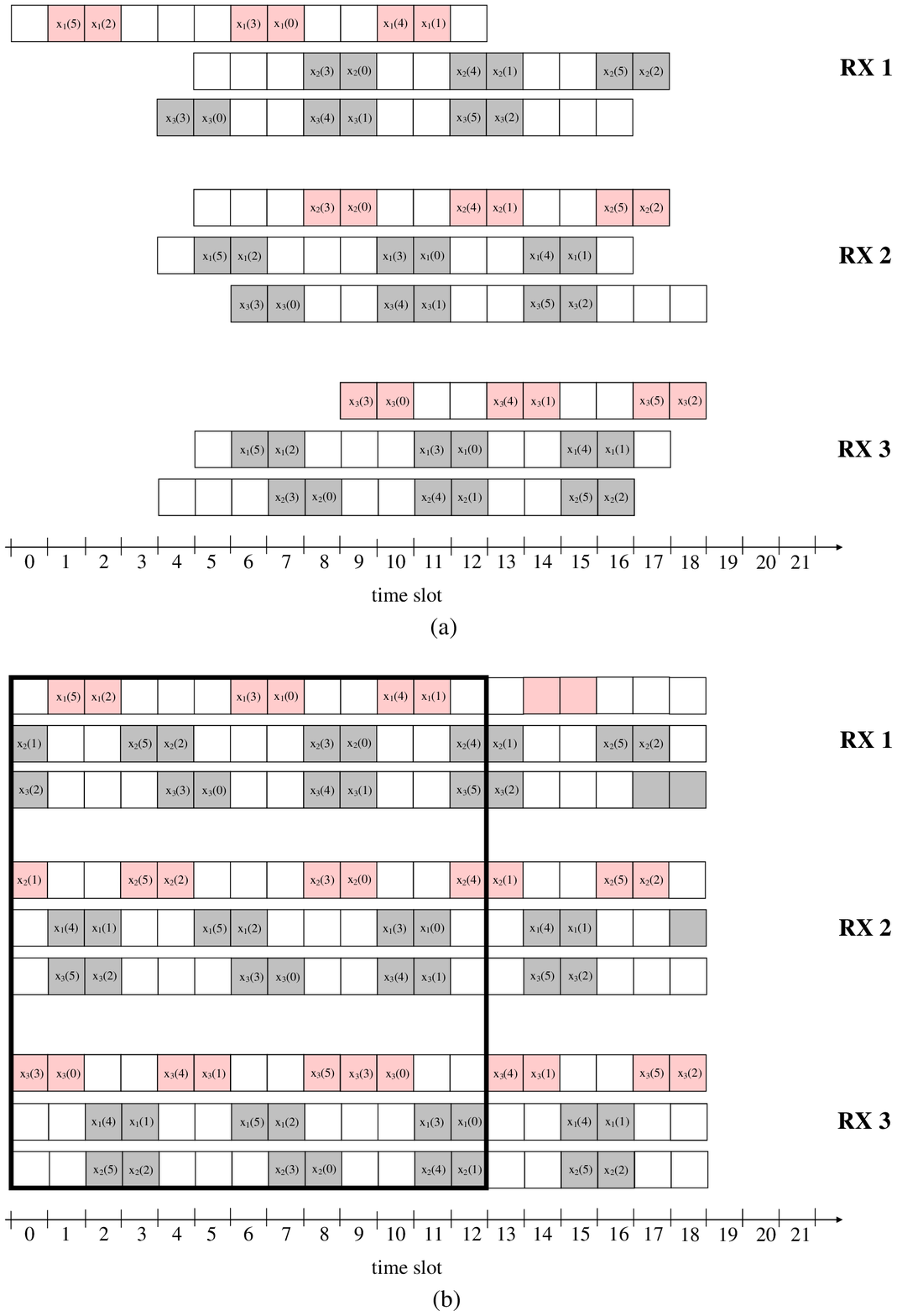}
\caption{Illustration of 3 user interference alignment scheme of [5] in the time domain. See example
\ref{exm:delays} for a description. (a) Received sequences with the cyclic prefix omitted. (b) Received sequences
incorporating the cyclic prefix.} \label{fig:3useralign}
\end{figure}

\begin{example}\label{exm:delays}
Suppose the link delays are $l_{11}=0$, $l_{22}=5$, $l_{33} = 9$, $l_{12} = 5$, $l_{13} = 4$, $l_{21} = 4$, $l_{23} =
6$, $l_{31} = 5$, $l_{32} = 4$. Then $d_1 = 7$, $d_2 = 4$ and $d_3 = 1$. Also $l=5-4+6-4+5-4=4$. Note that for
illustrative purposes the direct delays have been precisely chosen such that the data and interference subspaces are
orthogonal. Choose the data block length $M$ to be the prime 13, and use a cyclic prefix of length $9$. The total
length of cyclic prefix plus data block is 21. Then tx 1 will transmit its 0th data symbol, namely $x_1[0]$, in time
slot $d_1 \mod M = 7 \mod 13 = 7$. It's second data symbol $x_1[1]$ will be transmitted in time slot $l+d_1\mod M = 4 +
7 \mod 13 = 11$. It's third data symbol $x_1[2]$ will be transmitted in time slot $2l+d_1\mod M = 15 \mod 13 = 2$,
etc.. These data symbols will arrive at rx 2 delayed by $l_{21} = 4$ time slots. Thus $x_1[0]$ will appear as
interference at rx 2 during time slot $7+4=11$, $x_1[1]$ will appear as interference during time slot $11+4=15$, etc...
Similarly tx 1's data symbols will arrive at rx 3 delayed by $l_{31} = 5$ time slots. Similarly one can do the same
computation for tx 2's and tx 3's data symbols.

The details are given in figure \ref{fig:3useralign}. In part (a) of the figure the cyclic prefix has been omitted
for illustrative purposes. It is incorporated into the picture in part (b). The red shaded boxes contain data symbols,
the grey shaded boxes contain interference symbols. Notice the interference alignment property manifests itself as an
overlapping of interfering data symbols. The shaded, but unlabeled boxes represent symbols belonging to the next OFDM
block. The black box outlines those time slots that are used for decoding. The 9 time slots prior to these are
discarded when the cyclic prefix is removed. The 9 unboxed time slots to the right of the black outline will also be
discarded, but during the next OFDM block.
\end{example}

\subsection{K-user channels}

In the previous section we demonstrated that for three-user LOS channels, the frequency domain scheme of \cite{Jafar1}
has a simple analog in the time domain that works well when the block length is chosen to be a prime number, and the
direct delays take on particular values. One would imagine that an analogous scheme for $K>3$ users would therefore
also exist and work well. This is not the case. In fact for the LOS interference channel with more than three users,
the alignment scheme of \cite{Jafar1} has various shortcomings which result in it achieving zero degrees of freedom in
total. Our construction (in section \ref{sec:non_vanish_se}) is inspired by the use of a generalized arithmetic
progression in \cite{Jafar1}, but circumvents the schemes shortcomings by:
\begin{enumerate}
  \item Truncating the generalized arithmetic progression appropriately.
  \item Interleaving many replicas of the progression, with random offsets.
  \item Scaling the bandwidth with $K$.
\end{enumerate}

For example, to understand why truncation is necessary, recall the precoding matrices used in for the $K$-user channel
in \cite{Jafar1} are of the form
\begin{equation*}
{\bf V}_j = {\bf S}_j {\bf B}
\end{equation*}

\noindent for $j=2,\dots,K$, where the matrix $\bf B$ is composed of the column vectors in the set
\begin{equation*}
{\cal B} = \left\{ \left( \prod_{i,j\in\{2,3,\dots,K\},i\neq j,(i,j)\neq (2,3)} (\overline {\bf H}_{i1}^{-1}\overline
{\bf H}_{ij}{\bf S}_j)^{\alpha_{ij}}\right){\bf w}: \alpha_{ij} \in \{0,1,\dots,n-1\} \right\},
\end{equation*}

\noindent and ${\bf S}_j = \overline {\bf H}_{1j}^{-1}\overline {\bf H}_{13}\overline {\bf H}_{23}^{-1}\overline {\bf
H}_{21}$. Observing equation (\ref{eqn:ch_matrices}), write the link matrices in the form
\begin{equation*}
\overline {\bf H}_{ij} = {\bf Z}^{l_{ij}},
\end{equation*}

\noindent where
\begin{equation*}
{\bf Z}\triangleq\left(
              \begin{array}{ccccc}
                1 &  &  &  &  \\
                 & e^{-j2\pi\cdot /M} &  &  &  \\
                 &  & e^{-j2\pi\cdot 2/M} &  &  \\
                 &  &  & \ddots &  \\
                 &  &  &  & e^{-j2\pi\cdot (M-1)/M} \\
              \end{array}
            \right).
\end{equation*}

\noindent Then we have
\begin{equation*}
{\cal B} = \left\{ {\bf Z}^{\sum_{i,j\in\{2,3,\dots,K\},i\neq j,(i,j)\neq (2,3)} \alpha_{ij}\tilde l_{ij}}{\bf w}:
\forall \alpha_{ij} \in \{0,1,\dots,n-1\} \right\}.
\end{equation*}

\noindent where $l_{ij} = -l_{ij}+l_{ij}-l_{1j}+l_{13}-l_{23}+l_{21}$. For sufficiently large $n$ we will be able to
find many pairs $(\{\alpha_{ij}\},\{\alpha_{ij}'\})$ such that
\begin{equation*}
\sum_{i,j\in\{2,3,\dots,K\},i\neq j,(i,j)\neq (2,3)} \alpha_{ij}\tilde l_{ij} = \sum_{i,j\in\{2,3,\dots,K\},i\neq
j,(i,j)\neq (2,3)} \alpha_{ij}'\tilde l_{ij}.
\end{equation*}

\noindent Thus the precoding matrices {\it will loose rank} as many of their columns will be repeats of previous ones.
This phenomenon of repeated elements is common in generalized arithmetic progressions over integer fields. See for
example \cite{Tao}. How much rank will be lost? Observe that the largest exponent of ${\bf Z}$ in ${\cal B}$ will be no
greater than $(n-1)((K-1)(K-2)-1)\max l_{ij}$. But there are $n^{(K-1)(K-2)-1}$ columns in $\bf B$. Thus as
$n\rightarrow \infty$ the rank of $\bf B$ will scale only like $O(n)$ due to repeated columns, whilst the dimension of
the space scales like roughly $O(n^{(K-1)(K-2)-1})$. Hence the total degrees of freedom goes to zero unless the
progression is truncated.

\subsection{Bandwidth Scaling Revisited}

The final issue we address in terms of reconciling time and frequency domain interpretations, is that of bandwidth
scaling. We now demonstrate that bandwidth scaling is required in the scheme of \cite{Jafar1} when the physical channel
model is brought into the picture.

\begin{theorem}\label{thm:bsr}
In a multipath fading channel with $L$ taps, if the bandwidth satisfies
\begin{equation*}
\lim_{K\rightarrow\infty} \frac{\log W}{\log ((K-1)(K-2)-1)^{(K-1)(K-2)-3}} = 0,
\end{equation*}

\noindent then the total degrees of freedom achieved by the $K$-user interference alignment scheme of \cite{Jafar1}
goes to zero as $K\rightarrow \infty$.
\end{theorem}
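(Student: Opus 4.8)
\textbf{Proof proposal for Theorem \ref{thm:bsr}.}

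The plan is to reuse the counting argument already carried out in Section~\ref{sec:fdomain} for the $K$-user scheme of \cite{Jafar1}, and show that the only way to prevent the precoding matrix $\bf B$ from losing essentially all of its rank is to let $L$ (equivalently $W$) grow super-polynomially in $(K-1)(K-2)-1$. Recall from the discussion following equation for $\cal B$ that the columns of $\bf B$ are the vectors ${\bf Z}^{\sum \alpha_{ij}\tilde l_{ij}}{\bf w}$ indexed by $\{\alpha_{ij}\}\in\{0,\dots,n-1\}^{(K-1)(K-2)-1}$, and that two columns coincide whenever the corresponding exponents of $\bf Z$ agree modulo $M$. Let $P\triangleq (K-1)(K-2)-1$ denote the number of free coefficients. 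The achievable degrees of freedom of user~$1$ in this scheme is $n/(M+l_{\max})$, and $n=(M-1)/2$-type scaling means the total degrees of freedom is governed by $\rank({\bf B})/M$; so it suffices to show that under the stated bandwidth condition $\rank({\bf B})/n\to 0$, hence the dimension actually usable for data is $o(n^{P})$ relative to the ambient $n^{P}$, forcing the multiplexing gain to vanish.

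First I would make the rank bound quantitative. The exponent $\sum_{i,j}\alpha_{ij}\tilde l_{ij}$ lies in an interval of width at most $(n-1)P\max_{ij}|\tilde l_{ij}|\le (n-1)P\cdot cL$ for an absolute constant $c$ (since each $\tilde l_{ij}$ is a bounded integer combination of the $l_{ij}\in\{0,\dots,L-1\}$). Hence the number of \emph{distinct} exponents, and therefore $\rank({\bf B})$, is at most $(n-1)PcL+1 = O(nPL)$. On the other hand, for the scheme to deliver a nonvanishing multiplexing gain one needs $\rank({\bf B})$ to be a constant fraction of $n^{P}$ (that is the whole point of the generalized arithmetic progression construction in \cite{Jafar1}). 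Combining, a nonvanishing total degrees of freedom requires $n^{P}=O(nPL)$, i.e. $L=\Omega\!\left(n^{P-1}/P\right)$. Since the construction is free to choose $n$ as large as desired and the degrees of freedom is an asymptotic ($n\to\infty$) quantity, in fact the relevant statement is: for any fixed $L$ the ratio $\rank({\bf B})/n^{P}\le cPL/n^{P-1}\to 0$ as $n\to\infty$, so the per-user multiplexing gain is $0$; to get a positive gain one must couple $n$ and $L$ so that $L$ grows at least like a fixed power of $n$, and tracking the exact exponent gives $L=\Omega(P^{\,P-3})$ up to lower-order factors, which is exactly the threshold in the theorem once one recalls $L$ is proportional to $W$.

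The key steps in order: (i) write the columns of $\bf B$ as powers of $\bf Z$ as in the excerpt and identify two columns as equal iff their $\bf Z$-exponents agree mod $M$; (ii) bound the spread of the exponents by $O(nPL)$ and conclude $\rank({\bf B})=O(nPL)$; (iii) observe that the target data dimension is $n^{P}$ and that the achieved degrees of freedom is $\rank({\bf B})/M$ while the ``conflict-free'' normalization forces $M=\Theta(n^{P})$ for the scheme to even be consistent, so the degrees of freedom is $O(PL/n^{P-1})$; (iv) impose the bandwidth hypothesis $\log W/\log P^{P-3}\to 0$, which via $L\propto W$ gives $L=o(P^{P-3})$, and feed this into the bound from (iii) together with the growth of $n$ dictated by the construction to get total degrees of freedom $\to 0$. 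The main obstacle I anticipate is step (iii): pinning down the precise relationship between $M$, $n$, and the number of free coefficients $P$ in the $K$-user construction of \cite{Jafar1} — in particular justifying that avoiding modular collisions among the $n^{P}$ exponents genuinely requires $M$ (hence $L$, hence $W$) of order $P^{P-3}$ rather than something smaller, and making sure the $-3$ in the exponent (as opposed to $-1$) is the correct accounting once one subtracts the dimensions lost to the cyclic prefix and to the one distinguished pair $(2,3)$ excluded from the product defining $\cal B$. Everything else is the same union-bound / arithmetic-progression-density bookkeeping used in the proof of Theorem~\ref{thm:mr} and in the ``Bandwidth scaling'' subsection.
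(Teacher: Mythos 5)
Your quantitative skeleton matches the paper's: bound the rank of the precoding matrix by $O(nPL)$ with $P=(K-1)(K-2)-1$, compare it with the ambient dimension $\Theta(n^P)$, invoke the required growth of $n$ with $K$ from \cite{Jafar1}, and conclude the total degrees of freedom vanishes unless $L\propto W$ grows like $P^{P-3}$. But there is a genuine gap in how you justify the rank bound. You argue via \emph{column collisions}: columns of ${\bf B}$ are ${\bf Z}^{\sum\alpha_{ij}\tilde l_{ij}}{\bf w}$ and two columns are equal iff their exponents agree mod $M$, so the rank is at most the number of distinct exponents. That mechanism is correct only for the LOS (single-path) channel, where each $\overline{\bf H}_{ij}$ is a pure power ${\bf Z}^{l_{ij}}$. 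Theorem \ref{thm:bsr}, however, is about a multipath channel with $L$ taps, where ${\bf H}_{ij}=\sum_{l=0}^{L-1}a_{ij,l}{\bf Z}^{l}$ is a degree-$(L-1)$ polynomial in ${\bf Z}$ with generic coefficients. In that setting distinct coefficient tuples $\{\alpha_{ij}\}$ generically produce \emph{distinct} columns, so the collision count gives you nothing; the rank deficiency arises instead from linear dependence: every column, after a suitable manipulation, is $q_\alpha({\bf Z}){\bf w}$ for a polynomial $q_\alpha$ of degree at most $6n(L-1)P$, hence all columns lie in the span of $\{{\bf w},{\bf Z}{\bf w},\dots,{\bf Z}^{6n(L-1)P}{\bf w}\}$ and the rank is at most $6n(L-1)P+1$. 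This degree-counting step is the missing idea, and it requires a second ingredient you never confront: in the multipath case the inverses ${\bf H}_{ij}^{-1}$ (which appear in ${\bf S}_j$ and in the products defining ${\cal B}$) are \emph{not} bounded-degree polynomials in ${\bf Z}$, so one must first factor out a common term such as $\bigl(\prod{\bf H}_{i1}{\bf H}_{1j}{\bf H}_{23}\bigr)^{-n}$, which does not affect the rank, leaving only nonnegative powers of the ${\bf H}$'s inside the column set. In the LOS case inverses are harmless (negative powers of ${\bf Z}$), which is why your argument goes through there but does not transfer.

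On the point you flag as your main worry --- pinning down the $P-3$ exponent and the exact relation between $M$, $n$ and $P$ --- you need not be concerned: the paper's own accounting is equally coarse. It simply combines the bound $6n(L-1)P/n^{P}$ on total degrees of freedom with the requirement from \cite{Jafar1} that $\lim_{K\to\infty}\log n/\log P>0$, and reads off that the bound vanishes unless $\lim_{K\to\infty}\log L/\log P^{\,P-3}>0$; no finer bookkeeping of the cyclic prefix or of the excluded pair $(2,3)$ is performed. So once you replace the collision argument by the polynomial-degree argument (with the inverse-clearing step), the rest of your outline closes the proof essentially as the paper does.
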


\noindent This means that the bandwidth must scale at least as fast as $O(((K-1)(K-2)-1)^{(K-1)(K-2)-3})$ which is
roughly the same scaling that is required in theorem \ref{thm:mr}, namely $O(K^{2K^2})$.

\begin{proof}
\noindent For a general multipath channel with $L$ taps, the link matrices are of the form
\begin{equation*}
{\bf H}_{ij} = \sum_{l=0}^{L-1} a_{ij,l} {\bf Z}^{l}.
\end{equation*}

\noindent Using the commutativity of the diagonal ${\bf H}_{ij}$ matrices we can write
\begin{equation*}
{\bf V}_j = {\bf S}_j \left(\prod_{i,j\in\{2,3,\dots,K\},i\neq j,(i,j)\neq (2,3)} {\bf H}_{i1}{\bf H}_{1j} {\bf H}_{23}
\right)^{-n} {\bf C},
\end{equation*}

\noindent where the matrix $\bf C$ is composed of the column vectors in the set
\begin{multline*}
{\cal C} = \Bigg\{ \left( \prod_{i,j\in\{2,3,\dots,K\},i\neq j,(i,j)\neq (2,3)} ({\bf H}_{i1}{\bf H}_{1j}{\bf
H}_{23})^{n-\alpha_{ij}}({\bf H}_{ij}{\bf H}_{13}{\bf H}_{21})^{\alpha_{ij}}\right){\bf w}: \\
\alpha_{ij} \in \{0,1,\dots,n-1\} \Bigg\}.
\end{multline*}

\noindent In \cite{Jafar1} the minimum scaling of $n$ with $K$ required is
\begin{equation*}
\lim_{K\rightarrow\infty} \frac{\log n}{\log (K-1)(K-2)-1} > 0.
\end{equation*}

Each of the ${\bf H}_{ij}$ matrices is a polynomial of degree at most $L-1$ in the matrix ${\bf Z}$. Thus each column
of $\bf C$ is a polynomial of degree at most $6n(L-1)((K-1)(K-2)-1)$ in the matrix ${\bf Z}$. This means the maximum
rank of $\bf C$ is $6n(L-1)((K-1)(K-2)-1)+1$, as any $c$ polynomials of degree $\le d$ that are in general position, are
linearly dependent for $c>d+1$. The total number of rows in $\bf C$ however, is at least $n^{(K-1)(K-2)-1}$. Thus the total degrees
of freedom is no more than
\begin{equation*}
\frac{6n(L-1)((K-1)(K-2)-1)}{n^{(K-1)(K-2)-1}}.
\end{equation*}

\noindent which goes to zero as $K\rightarrow\infty$, unless $L$ (and hence $W$) scales like
\begin{equation*}
\lim_{K\rightarrow\infty} \frac{\log L}{\log ((K-1)(K-2)-1)^{(K-1)(K-2)-3}} > 0.
\end{equation*}
\end{proof}

\section{Discussion and Conclusion}\label{sec:conc}

We demonstrated in section \ref{sec:BS} that if the bandwidth scales sub-linearly in $K$, then the independence rate of
the interference graph goes to zero as $K\rightarrow\infty$, and in section \ref{sec:non_vanish_se}, that if the
bandwidth scales like $O((2K(K-1))^{K(K-1)})$, then the independence rate scales arbitrarily close to $O(K)$. This
brings us to an interesting open question. How does the independence rate scale in the intermediate regime where
$O(K/\log K) < W < O((2K(K-1))^{K(K-1)})$? What about the $D$-path channel?

There are several variations of the LOS interference channel for which the interference graph techniques discussed in
this work are applicable, and for which further study is warranted. These were alluded to earlier. For instance, the partial connected interference channel is a more accurate model of an extended wireless adhoc
network. It is possible that for such channels, a bandwidth scaling much less than $O((2K(K-1))^{K(K-1)})$ is
sufficient in order for the independence rate to scale arbitrarily close $O(K)$. It is not clear how one would approach
the problem of showing this if it were true, or disproving it otherwise. An interference channel with one dominant path
per link and several sub-dominant ones, is also an interesting candidate for investigation. This $D$-path channel is
commonly encountered in practice. It is possible that an optimization problem similar to
\ref{eqn:op_problem}, but allowing for signal and interference to overlap, can be formulated for this scenario. It would be interesting to study whether time-domain based
interference alignment can provide gains in this case. Presumably, for fixed $W$, as the dominance of one physical path over the others
diminishes, so too will the performance improvement.

Lastly we discuss some interesting fringe benefits associated with the communication schemes presented in this work.
Whereas interference alignment in the frequency domain requires coding over very long blocks, which results in
substantial delay due to the necessity of buffering data symbols at the encoder and received symbols at the decoder, no
such delay is required for the time-indexed interference graph techniques detailed above. Data symbols are transmitted
as soon as an appropriate time slot is reached, and detected when received. The only delay incurred is that stemming
from the use of an error correction code. In the same respect, the encoding and decoding complexity are greatly
reduced. Thus delay and complexity issues are non-existent.

Subspace conditioning issues are also non-existent. Interference alignment in the frequency domain, although performing well at
very high-$\PSD$, suffers at moderate $\PSD$ if the data and interference subspaces are close to one another. Time
domain interference alignment techniques are free from this problem as the data and interferences subspaces are
orthogonal by design.

\section{Acknowledgements}

The authors would like to thank Professor Satish Rao for contributing the converse argument of section \ref{sec:BS}, and Professor Tom Luo for the argument behind the proof of theorem \ref{thm:bsr}.

\section{Appendix}

\subsection{Proof of Theorem \ref{thm:C}}

First suppose $l\neq 0$. From the interference graph form $l$ infinite {\it chain graphs} ${\cal G}'_0,\dots,{\cal
G}'_{l-1}$. These graphs will be functions of $K$ and $\{l_{ij}\}_{i\neq j}$ but for notational brevity we omit this
notation. The $i$th chain graph ${\cal G}'_i$ has vertex and edges sets
\begin{align*}
{\cal V}'_i &= \{v'_i(0),v'_i(1),\dots\} \\
{\cal E}'_i &= \{e'_i(0),e'_i(1),\dots\}.
\end{align*}

\noindent This is an undirected graph with edge $e'_i(j)$ joining vertices $v'_i(j)$ and $v'_i(j+1)$. We now associate
the chain graphs with the interference graph. Let
\begin{align*}
v'_i(6k) &= v_3(kl+i) \\
v'_i(6k+1) &= v_1(kl+i+l'_{31}) \\
v'_i(6k+2) &= v_3(kl+i+l'_{31}+l'_{13}) \\
v'_i(6k+3) &= v_2(kl+i+l'_{31}+l'_{13}+l'_{32}) \\
v'_i(6k+4) &= v_1(kl+i+l'_{31}+l'_{13}+l'_{32}+l'_{21}) \\
v'_i(6k+5) &= v_2(kl+i+l'_{31}+l'_{13}+l'_{32}+l'_{21}+l'_{12})
\end{align*}

\begin{figure}
\centering
\includegraphics[width=430pt]{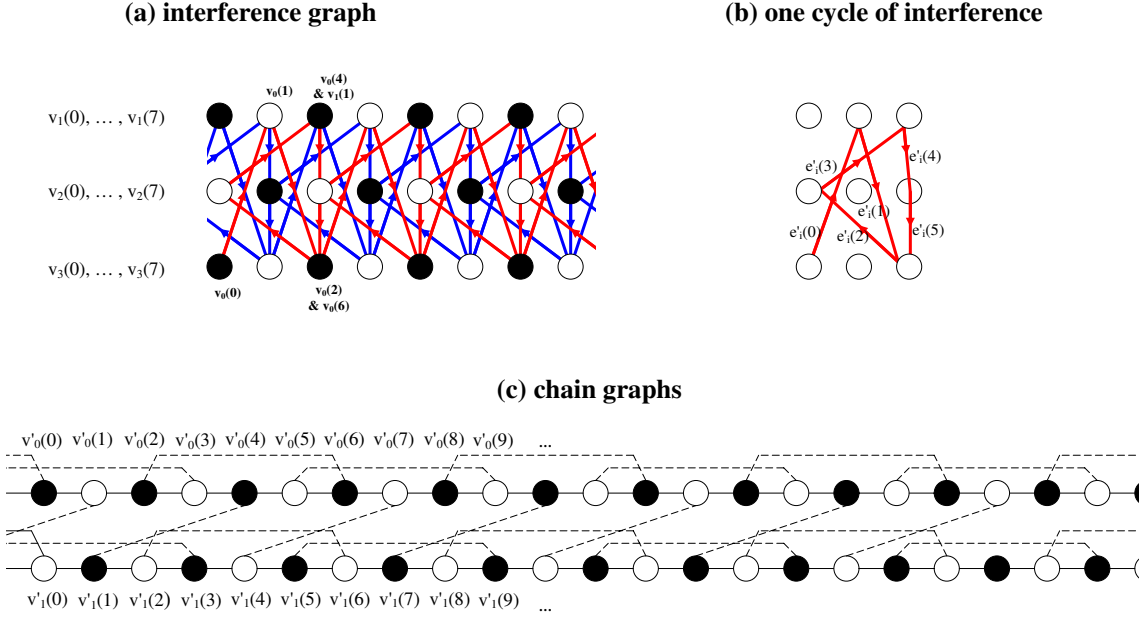}
\caption{(a) A segment of the interference graph. Each row of vertices corresponds to the transmission opportunities
for each of the users. The shaded vertices correspond to a feasible transmit pattern achieving an independence rate of
$3/2$. A few of the vertices are labeled with their equivalent vertices in the chain graphs. In this example the normalized cross delays are $l'_{21} =
0$, $l'_{31}=1$, $l'_{12}=2$, $l'_{32}=0$, $l'_{13}=1$ and $l'_{23}=-2$. Thus $l=0+1+2+0+1-2=2$ meaning that each cycle
of interference moves two time slots to the right as illustrated in (b), which shows a single cycle from the
interference graph, containing all six directed edges. (c) The corresponding chain graphs. As $l=2$, there are two
chains. A pair of twin vertices connected by a dashed line in the chain graphs, correspond to the same vertex in the
original interference graph. }\label{fig:chain_graphs}
\end{figure}

\noindent where the $v_k(t)$ are the vertices in the original interference graph. See figure \ref{fig:chain_graphs} for
an illustration. Note that the mapping from the interference graph to the $l$ chain graphs is not one-one. In
particular, each vertex in the interference graph is associated with a {\it pair} of vertices in the set of chain
graphs. Paired vertices are called {\it twins} as they correspond to the same vertex from the original interference
graph. We now think of a feasible transmit pattern as a collection of vertices from the set of chain graphs. However,
note that if a feasible transmit pattern results in a particular vertex (from the set of chain graphs) being included
in the independent set, its twin will also be included. Likewise if a feasible transmit pattern results in a particular
vertex being excluded, its twin will also be excluded. The key to characterizing the set of channels for which an
interference rate of 3/2 is achievable lies in understanding which pairings are favorable, and which are not.

The pairings can be succinctly described by the following three equations
\begin{align} \label{eqn:IO_conditions1}
v'_i(6k) &= v'_{i-l_3\modd l}(6(k+\lfloor l_3/l \rfloor)+2) \\
\label{eqn:IO_conditions2} v'_i(6k+1) &= v'_{i-l_1\modd l}(6(k+\lfloor l_1/l \rfloor)+4) \\
\label{eqn:IO_conditions3} v'_i(6k+3) &= v'_{i-l_2\modd l}(6(k+\lfloor l_2/l \rfloor)+5)
\end{align}

\noindent for $i=0,\dots,l-1$ and $k=0,1,\dots$.

In order to achieve an independence rate of 3/2, half of all vertices must be included in the transmit pattern. Denote
the transmit pattern by ${\cal T}$. Because in each chain, all neighboring vertices are connected by an edge, this is
only possible if in each chain, every second node is included in the transmit pattern. For each chain there are two
ways of doing this, either $v'_i(2k)\in{\cal T}$ for all $k$, or $v'_i(2k+1)\in{\cal T}$ for all $k$. Let $c_i$ denote
the phase of the $i$th chain. If the former condition holds, we say the chain is {\it in phase} and write $c_i = I$. If
the latter holds we say the chain is {\it out of phase} and write $c_i=O$. In the entire graph there are only $2^l$
combinations we need to examine, corresponding to all possible inphase/out of phase assignments for the $l$ chains. A
feasible transmit pattern achieving independence rate $3/2$ exists if and only if each chain admits an I or O
assignment and the assignment of I's and O's to the $l$ chains does not violate conditions
(\ref{eqn:IO_conditions1})-(\ref{eqn:IO_conditions3}). Thus we wish to characterize those channels for which such an
I/O assignment can be found.

At this point we consider an example. Suppose $l$ divides $l_1$. We claim an independence rate of 3/2 is not
achievable. To see this argue by contradiction. Assume that $c_0 = I$. As $l$ divides $l_1$, we have $l_1\modd l=0$ and
condition (\ref{eqn:IO_conditions2}) pairs vertex $v_0(1)$ with vertex $v_0(6k'+4)$ for some integer $k'$. But these
vertices lie an odd distance apart on the same chain, so working backwards we see that we must have $c_0=O$, a
contraction. So suppose instead that $c_0 = 0$. Using the same logic as before we arrive at $c_0 = I$, again a
contradiction. Thus the independence rate is less than 3/2.

From this example we see that condition (\ref{eqn:IO_conditions1}) tells us if $c_0=I$ then we must also have
$c_{-l_3\modd l}=I$, as vertices $v'_0(0)$ and $v'_{-l_3\modd l}(6\lfloor l_3/l \rfloor+2)$ are an even distance apart.
Continuing this logic we see that we must also have $c_{-l_1\modd l}=O$, and $c_{-l_2\modd l}=I$. We can also conclude
something else, as $c_{-l_1\modd l}=O$, we must have $c_{-2l_1\modd l}=I$ by condition (\ref{eqn:IO_conditions2}).
Continuing further, we must have $c_{-2l_1-l_2\modd l}=I$ by condition (\ref{eqn:IO_conditions3}) and so on.

By this point it should be clear that conditions (\ref{eqn:IO_conditions1})-(\ref{eqn:IO_conditions3}) are satisfied if
and only if for all integers $k_1,k_2,k_3$,
\begin{equation} \label{eqn:c_cond}
c_{2k_1l_1+k_2l_2+k_3l_3\modd l} \neq c_{(2k_1+1)l_1+k_2l_2+k_3l_3\modd l}
\end{equation}

\noindent Let ${\cal P}(l)$ denote the group consisting of integers $\{0,\dots,l-1\}$ together with the addition modulo
$l$ operation. Consider the set of chains $c_{2k_1l_1+k_2l_2+k_3l_3\modd l}$ for all integers $k_1,k_2,k_3$. This set
forms a subgroup of ${\cal P}(l)$ with generator $\gcd(2l_1,l_2,l_3)$. We denote this subgroup by ${\cal
P}_{\gcd(2l_1,l_2,l_3)}(l)$. It has $\gcd(2l_1,l_2,l_3,l)-1$ cosets other than itself. The set of chains
$c_{(2k_1+1)l_1+k_2l_2+k_3l_3\modd l}$ for all integers $k_1,k_2,k_3$, forms coset number $l_1 \modd
\gcd(2l_1,l_2,l_3,l)$. But as
\begin{equation*}
2l_1 \modd \gcd(2l_1,l_2,l_3,l)=0,
\end{equation*}

\noindent either $l_1 \modd \gcd(2l_1,l_2,l_3,l)$ equals $\gcd(l_1,l_2/2,l_3/2,l/2)$, or 0. If it is zero, condition
(\ref{eqn:c_cond}) above is violated. This occurs if and only if $l_1$ is a multiple of $\gcd(2l_1,l_2,l_3,l)$.
Alternatively if it equals $\gcd(l_1,l_2/2,l_3/2,l/2)$ then we can choose the phases of half the cosets, namely cosets
\begin{equation*}
0,1,..., \gcd(l_1,l_2/2,l_3/2,l/2)-1
\end{equation*}

\noindent arbitrarily, and still satisfy (\ref{eqn:c_cond}). For this reason we refer to the chains
\begin{equation*}
c_0,c_1,..., c_{\gcd(l_1,l_2/2,l_3/2,l/2)-1}
\end{equation*}

\noindent as seed chains. This means that there are $2^{\gcd(l_1,l_2/2,l_3/2,l/2)}$ possible solutions that achieve
independence rate $3/2$. So what does it mean for $l_1$ to not be a multiple of $\gcd(2l_1,l_2,l_3,l)$? It means that
\begin{align*}
\gcd(l_1,2l_1,l_2,l_3,l) \neq \gcd(2l_1,l_2,l_3,l).
\end{align*}

\noindent In other words
\begin{align*}
\gcd(l_1,l_2,l_3,l) \neq \gcd(2l_1,l_2,l_3,l).
\end{align*}

\noindent It is shown in lemma \ref{lem:alpha} that the above inequality is equivalent to having $\gamma_1<\gamma_2$,
$\gamma_1<\gamma_3$ and $\gamma_1<\gamma$. This establishes theorem \ref{thm:C} for $l\neq 0$.

Now suppose $l=0$. This proof is a slight modification of the previous. From the interference graph from an infinite number of {\it cycle graphs} ${\cal G}_0',{\cal G}_1',\dots$. The $i$th cycle graph ${\cal G}_i'$ has vertex and edge sets
\begin{align*}
{\cal V}_i' &= \{v_i'(0),v_i'(1),v_i'(2),v_i'(3),v_i'(4),v_i'(5)\} \\
{\cal E}_i' &= \{e_i'(0),e_i'(1),e_i'(2),e_i'(3),e_i'(4),e_i'(5)\},
\end{align*}

\noindent where edge $e_i'(j \mod 6)$ joins vertices $v_i(j \mod 6)$ and $v_i(j+1 \mod 6)$, for $j=0,1,2,3,4,5$. Notice that for $l\neq 0$ we created a finite number of chain graphs, each with an infinite number of vertices, whereas for $l=0$ we create an infinite number of cycle graphs, each with a finite number of vertices. We now associate the cycle graphs with the interference graph. Let
\begin{align*}
v'_i(0) &= v_3(i) \\
v'_i(1) &= v_1(i+l'_{31}) \\
v'_i(2) &= v_3(i+l'_{31}+l'_{13}) \\
v'_i(3) &= v_2(i+l'_{31}+l'_{13}+l'_{32}) \\
v'_i(4) &= v_1(i+l'_{31}+l'_{13}+l'_{32}+l'_{21}) \\
v'_i(5) &= v_2(i+l'_{31}+l'_{13}+l'_{32}+l'_{21}+l'_{12})
\end{align*}

\noindent where $v_k(t)$ are the vertices in the original interference graph. As before, the mapping from interference graph to the indefinite number of cycle graphs is not one-one. Each vertex in the interference graph is associated with a pair of vertices in the set of cycle graphs. The pairings are described by the following three equations
\begin{align*}
v'_i(0) &= v'_{i-l_3}(2) \\
v'_i(1) &= v'_{i-l_1}(4) \\
v'_i(3) &= v'_{i-l_2}(5)
\end{align*}

We want to assign each cycle graph $i$ a phase, either $c_i=I$ or $c_i=O$ and we need to find necessary and sufficient conditions for a feasible assignment. Similarly to equation \ref{eqn:c_cond}, the condition we are after is
\begin{equation*}
c_{2k_1l_1+k_2l_2+k_3l_3} \neq c_{(2k_1+1)l_1+k_2l_2+k_3l_3}
\end{equation*}

\noindent for all integers $k_1,k_2,k_3$. By this point it should be clear, based on the proof for the $l\neq 0$ case, that the above condition is equivalent to
\begin{equation*}
\gcd(l_1,l_2,l_3) \neq \gcd(2l_1,l_2,l_3).
\end{equation*}

\noindent This inequality is equivalent to having $\gamma_1<\gamma_2$ and $\gamma_1<\gamma_3$. Thus we see that the conditions for $l=0$ case are the same as the $l\neq 0$ case if we set $\gamma =\infty$. This establishes the result in general.

\begin{lemma} \label{lem:alpha}
$\gcd(l_1,l_2,l_3,l) \neq \gcd(2l_1,l_2,l_3,l)$ if and only if $\gamma_1<\gamma_2$, $\gamma_1<\gamma_3$ and $\gamma_1<\gamma$.
\end{lemma}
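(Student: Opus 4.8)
The plan is to reduce the whole statement to a short computation with $2$-adic valuations. Write $v_2(n)$ for the exponent of $2$ in the prime factorization of an integer $n$, with the convention $v_2(0)=\infty$; thus $v_2(l_i)=\gamma_i$ and $v_2(l)=\gamma$. I would invoke two elementary facts: (i) for integers $a_1,\dots,a_m$ not all zero and any prime $p$, $v_p(\gcd(a_1,\dots,a_m))=\min_k v_p(a_k)$; and (ii) for every odd prime $p$, $v_p(2l_1)=v_p(l_1)$, so the odd-primary parts of $\gcd(l_1,l_2,l_3,l)$ and $\gcd(2l_1,l_2,l_3,l)$ are identical. Putting (i) and (ii) together, the two gcds are equal if and only if their $2$-adic valuations agree, and since $v_2(2l_1)=\gamma_1+1$ this says
\[
\gcd(l_1,l_2,l_3,l)\neq\gcd(2l_1,l_2,l_3,l)\iff\min(\gamma_1,\gamma_2,\gamma_3,\gamma)\neq\min(\gamma_1+1,\gamma_2,\gamma_3,\gamma).
\]

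It then remains to verify that $\min(\gamma_1,\gamma_2,\gamma_3,\gamma)\neq\min(\gamma_1+1,\gamma_2,\gamma_3,\gamma)$ precisely when $\gamma_1<\gamma_2$, $\gamma_1<\gamma_3$ and $\gamma_1<\gamma$. I would set $\mu=\min(\gamma_2,\gamma_3,\gamma)$, so that $\gamma_1<\gamma_2,\gamma_3,\gamma$ is the same as $\gamma_1<\mu$, and the question becomes whether $\min(\gamma_1,\mu)\neq\min(\gamma_1+1,\mu)$. For the ``if'' direction: if $\gamma_1<\mu$ then $\gamma_1+1\le\mu$, so $\min(\gamma_1,\mu)=\gamma_1$ while $\min(\gamma_1+1,\mu)=\gamma_1+1$, and these differ. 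For the ``only if'' direction I would argue the contrapositive: if $\gamma_1\ge\mu$ then $\min(\gamma_1,\mu)=\mu$ and also $\gamma_1+1>\gamma_1\ge\mu$ gives $\min(\gamma_1+1,\mu)=\mu$, so the two minima coincide. Hence inequality of the minima forces $\gamma_1<\mu$, which is exactly the claimed condition.

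A couple of bookkeeping points I would spell out, though none of them is a real obstacle. The convention $v_2(0)=\infty$ keeps fact (i) valid even when some of $l_1,l_2,l_3,l$ vanish, since a zero argument drops out of a gcd and contributes $\infty$ to the minimum. In the fully degenerate case $l_1=l_2=l_3=l=0$ both gcds equal $0$ and the right-hand side fails (as $\gamma_1=\infty$ cannot be strictly least), so the equivalence holds vacuously; and when $l_1=0$ but not all vanish we have $\gamma_1=\infty$, the right-hand side again fails, and $2l_1=l_1=0$ makes the two gcds literally identical, consistent with the claim. The hard part, such as it is, is purely clerical: getting the $\infty$/zero conventions and the direction of the min-comparison exactly right, which the two paragraphs above are designed to do.
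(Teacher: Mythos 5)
Your proof is correct and follows essentially the same route as the paper's: both reduce the comparison of $\gcd(l_1,l_2,l_3,l)$ and $\gcd(2l_1,l_2,l_3,l)$ to the inequality $\min(\gamma_1,\gamma_2,\gamma_3,\gamma)\neq\min(\gamma_1+1,\gamma_2,\gamma_3,\gamma)$ (the paper by factoring each gcd into its power-of-two part times the gcd of the odd parts, you via $2$-adic valuations, which is the same substance) and then observe this holds exactly when $\gamma_1$ is strictly smallest. Your treatment is in fact slightly more careful than the paper's, since you spell out both directions of the min-comparison and the $l_i=0$ (i.e.\ $\gamma_i=\infty$) conventions that the paper handles only implicitly.
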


\begin{proof}
Suppose
\begin{align*}
\gcd(l_1,l_2,l_3,l) &\neq \gcd(2l_1,l_2,l_3,l) \\
\Rightarrow \gcd(2^{\gamma_1}\beta_1,2^{\gamma_2}\beta_2,2^{\gamma_3}\beta_3,2^{\gamma}\beta) &\neq \gcd(2^{\gamma_1+1}\beta_1,2^{\gamma_2}\beta_2,2^{\gamma_3}\beta_3,2^{\gamma}\beta) \\
\Rightarrow \gcd(2^{\gamma_1},2^{\gamma_2},2^{\gamma_3},2^{\gamma})\gcd(\beta_1,\beta_2,\beta_3,\beta) &\neq \gcd(2^{\gamma_1+1},2^{\gamma_2},2^{\gamma_3},2^{\gamma})\gcd(\beta_1,\beta_2,\beta_3,\beta) \\
\Rightarrow \min(\gamma_1,\gamma_2,\gamma_3,\gamma) &\neq \min(\gamma_1+1,\gamma_2,\gamma_3,\gamma).
\end{align*}

\noindent This can only hold if $\gamma_1<\gamma_2$, $\gamma_1<\gamma_3$ and $\gamma_1<\gamma$. The proof in the opposite direction is identical.
\end{proof}

\end{document}